\definecolor{aqua}{rgb}{0.0, 1.0, 1.0}
\definecolor{babypink}{rgb}{0.96, 0.76, 0.76}
\definecolor{brightgreen}{rgb}{0.4, 1.0, 0.0}
\definecolor{brightpink}{rgb}{1.0, 0.0, 0.5}
\definecolor{cadmiumgreen}{rgb}{0.0, 0.42, 0.24}
\definecolor{carrotorange}{rgb}{0.93, 0.57, 0.13}
\definecolor{green1}{rgb}{0,0.5,0}
\definecolor{magenta}{rgb}{1.0, 0.11, 0.81}
\definecolor{mulberry}{rgb}{0.77, 0.29, 0.55}
\definecolor{smalt}{rgb}{0.0, 0.2, 0.6}
\definecolor{xgray}{rgb}{0.9, 0.9, 0.9}
\def \bes{\begin{equation*}}
\def \ees{\end{equation*}}
\def \bas{\begin{align*}}
\def \eas{\end{align*}}
\def \be{\begin{equation}}
\def \ee{\end{equation}}
\def \bbm{\begin{bmatrix}}
\def \ebm{\end{bmatrix}}
\newcommand{\cC}{{\cal C}}
\newcommand{\cD}{{\cal D}}
\newcommand{\cM}{{\cal M}}
\newtheorem{theorem}{Theorem}
\newtheorem{corollary}{Corollary}
\newtheorem{lemma}{Lemma}
\newtheorem{definition}{Definition}
\newtheorem{remark}{Remark}
\newcommand{\capacity}{\texttt{C}}
\newcommand{\cut}{\texttt{C}}
\newcommand{\rate}{\texttt{R}}
\tikzset{brace/.style={decorate, decoration={brace}},
 brace mirrored/.style={decorate, decoration={brace,mirror}},
}
\newcounter{brace}
\begin{document}

\title{Network Coding with Myopic Adversaries}

\author{
\IEEEauthorblockN{
Sijie~Li\thanks{Sijie Li is with the Department of Information Engineering, The Chinese University of Hong Kong, Shatin, Hong Kong. Email: \texttt{sijieli@link.cuhk.edu.hk}.},
Rawad~Bitar\thanks{Rawad Bitar is with the Institute for Communications Engineering, Technical University of Munich, Munich, Germany. Email: \texttt{rawad.bitar@tum.de}.},
Sidharth~Jaggi\thanks{Sidharth Jaggi is with the School of Mathematics, University of Bristol, Bristol, United Kingdom, and the Department of Information Engineering, The Chinese University of Hong Kong, Shatin, Hong Kong. Email: \texttt{sid.jaggi@bristol.ac.uk}.}
and~Yihan Zhang\thanks{Yihan Zhang is with the Faculty of Computer Science, Technion Israel Institute of Technology, Haifa, Israel. Email: \texttt{yihanzhang@cuhk.edu.hk}.}\\
}


}


\maketitle
\begin{abstract}
We consider the problem of reliable communication over a network containing a hidden {\it myopic} adversary who can eavesdrop on some $z_{ro}$ links, jam some $z_{wo}$ links, and do both on some $z_{rw}$ links. We provide the first information-theoretically tight characterization of the optimal rate of communication possible under all possible settings of the tuple $(z_{ro},z_{wo},z_{rw})$ by providing a novel coding scheme/analysis for a subset of parameter regimes. In particular, our vanishing-error schemes bypass the Network Singleton Bound
(which requires a zero-error recovery criteria) in a certain parameter regime where the capacity had been heretofore open. As a direct corollary we also obtain the capacity of the corresponding problem where information-theoretic secrecy against eavesdropping is required in addition to reliable communication.
\end{abstract}


\begin{center}
{\small{\textbf{\color{red} A short video describing this work  can be found in~\cite{video}.}}}
\end{center}

\section{Introduction}
\label{sec:intro}

Network coding is a network communication paradigm wherein nodes in a network non-trivially combine incoming packets to generate information on outgoing packets. It has been shown~ \cite{ahlswede2000network} that such combination operations are necessary and sufficient to attain information-theoretically optimal communication rates for many classes of network communication problems -- in particular, for {\it multicast} problems, if the smallest source-destination {\it min-cut} equals $\capacity$, network codes are able to attain this rate. Further, it was shown\cite{li2003linear,koetter2003algebraic,jaggi2005polynomial} that linear codes suffice for this purpose. Applications of network coding now abound in a variety of disparate fields, such as wireless systems~\cite{zhang2006hot}, distributed storage~\cite{dimakis2010network}, and router designs~\cite{CodedBulk:21}.

One complication in the network coding paradigm is the potential problem of errors -- due to the mixing operations in the network even a single corrupted packet may end up tainting the majority of the information flowing in the network; as such, a malicious jammer hiding in the network may be able inflict disproportionate damage. To combat this problem, network error-correcting codes were proposed by Cai and Yeung~\cite{yeung2006network,cai2006network}, followed by a plethora of computationally efficient code designs~\cite{JagLKHME:08,silva2011universal,silva2010universal,metric,KK08}.

The focus of this paper is on a complete characterization of the optimal throughput possible in the presence of a {\it myopic} jammer. Initial works on network error-correction (for instance~\cite{yeung2006network,cai2006network,KK08}) assumed the presence of an {\it omniscient} adversary -- an adversary who is able to observe {\it all} transmissions in a network, and then tailor his jamming scheme as a function of his observations. In such scenarios, it was shown as a consequence of the Network Singleton bound ~\cite{yeung2006network,cai2006network} that each of the adversary's injected corrupted packets can do ``double damage'', i.e., the optimal throughput obtainable is $\capacity-2z_{w}$, 
where $z_w$ equals the number of packets the adversary can inject into the network. In contrast, it was observed in~\cite{JagLKHME:08} that if the adversary is able to observe only $z_r$ packets and must design its jamming strategy as a function of these observations, then for the parameter regime $z_r+2z_{w} < \capacity$ (whence the adversary was said to be {\it limited}) in fact a throughput of $\capacity-z_w$ is obtainable, effectively making the jamming no more damaging than the relatively benign scenario of random noise.\footnote{It is important to highlight that such results attaining rates higher than $\capacity-2z_w$ are possible only if one accepts a vanishing probability of error metric rather than a zero-probability of error metric.} Extensions beyond this parameter regime were made in the setting where the adversary's noise is {\it additive}~\cite{YaoSJL:14,hayashi2017secrecy}. Such adversarial models arise naturally in a variety of settings wherein the adversary can only control (eavesdrop on and/or jam) a subset of network links due to its physical constraints. However, a complete information-theoretic characterization of the capacity region, especially for the important and physically relevant model of {\it overwrite} adversaries (see Remark~\ref{rem:add-ow}) was heretofore open.

In this work we focus on a more granular model parametrization that subsumes the limited adversary model of~\cite{JagLKHME:08} as a special case. For this generalized setting we provide a complete characterization of the information-theoretically optimal communication rate possible. In our model, there are:
\begin{itemize}
    \item {\underline{Read-Only links}}: $z_{ro}$ links that the adversary can only observe but not jam.
    \item {\underline{Write-Only links}}: $z_{wo}$ links that the adversary can only jam but not observe.
    \item {\underline{Read-Write links}}: $z_{rw}$ links that the adversary can both observe and jam.
\end{itemize}
As ancillary parameters, we also define
\begin{itemize}
    \item {\underline{Read links}}: $z_r$ denotes the overall number of links $z_{ro} + z_{rw}$ that the adversary can observe.
    \item {\underline{Write links}}: $z_w$ denotes the overall number of links $z_{wo} + z_{rw}$ that the adversary can jam.
\end{itemize}

Our main result is that if $z_{ro} + 2z_{rw} + 2z_{wo} < \capacity$ (the so-called ``weak adversary'' regime), then a rate of  $\capacity-z_w$ is attainable. The optimality of this rate can be seen by noting that if James were to just add random noise on $z_w$ links in the min-cut, direct information-theoretic cutset arguments imply that no higher rate is possible without resulting in a probability of error approaching one. Our results are quite strong -- they hold even in a distributed network coding setting, i.e., if none of the legitimate parties communicating have prior information neither about the network topology or linear network coding operations performed by internal nodes nor about which network links are being eavesdropped/jammed by the adversary. In contrast, the malicious adversary is assumed to know the network topology, coding operations of each network node (including the source and sink) in advance, and as a function of this knowledge is allowed to choose an arbitrary subset of $z_r$ links to eavesdrop on. On a basis of these observations the adversary may also choose $z_w$ links to jam (of which at most $z_{rw}$ may be from among the $z_r$ eavesdropped links), and additionally may base the contents of the corrupted packets he injects on all this information. Also, no computational restrictions are assumed on the adversary.

On the other hand if $z_{ro} + 2z_{rw} + 2z_{wo} \geq \capacity$ (the so-called ``strong adversary'' regime), prior work~\cite{converse} has already shown that even in particularly simple networks (``parallel-edge networks") no rate higher than $(\capacity-2z_w)^{+}$ is attainable, and indeed such a rate is already obtainable even against omniscient adversaries, for instance by the codes in~\cite{JagLKHME:08,silva2011universal,silva2010universal,metric,KK08}. Indeed, our main result may be viewed as the network coding generalization of~\cite{converse}.
A comparison of our work with related prior works is listed in Table~\ref{tab:related-work}. 
\begin{table*}[htbp]
\centering
\begin{tabular}{|l|l|l|l|l|}
		\hline
		 \textbf{} & \textbf{Adversary power} & \textbf{Network type} &\textbf{Rate} & \textbf{Decoding Complexity} \\ \hline
        Jaggi et.al. \cite{JagLKHME:08} & Strong Adv. $\capacity< z_{ro}+ 2z_{rw}+2z_{wo}$ & General Network & $\capacity-2z_w$ & $\mathcal{O}((n\capacity)^3)$\\ \hline
        Jaggi et.al. \cite{JagLKHME:08} & Weak Adv. $\capacity>z_r + 2z_w$ & General Network & $\capacity-z_w$ &  $\mathcal{O}(n\capacity^2)$\\ \hline
        Zhang et.al. \cite{converse} & Weak Adv. $\capacity > z_{ro} + 2z_{rw}+2z_{wo}$ & Parallel Edges & $\capacity-z_w$ & $\mathcal{O}(poly(n))$  \\ \hline
        {\bf This work} & Weak Adv. $\capacity > z_{ro} + 2z_{rw} + 2z_{wo}$ & General Network & $\capacity-z_w$ &  -- \\ \hline        
	\end{tabular}
\caption{\small{\it Related Works. Our work fills the gap of information-theoretical characterization for general network with the optimal rate $\capacity-z_w$. With the converse from \cite{converse}, there now is a full characterization of network error-correction under myopic adversary.}}
\label{tab:related-work}
\end{table*}
 
Our techniques rely on those developed for point-to-point myopic adversarial settings in~\cite{dey2019sufficiently}, carefully coupled with the appropriate {\it subspace metric} for the problem at hand~\cite{metric}. 
In the weak adversary regime, by the myopic nature of the adversary and the choice of the coding rate, the adversary has a considerable amount of uncertainty regarding the codeword transmitted through the network. 
Our analysis critically leverages such uncertainty and shows that under any adversarial action, only a small fraction of codewords may suffer from decoding errors.  

We extend our results to the case where the message must also be secured, in an information theoretic sense, from the adversary's observation. We show that by coupling our coding techniques with coset codes~\cite{el2012secure}, a rate of $\capacity-z_r-z_w$ can be achieved for the weak adversary regime. The optimality of this rate follows from meeting the converse derived in~\cite{converse} for parallel-edge networks. For the strong adversary regime, it is shown~\cite{converse} that no positive rate can be achieved while requiring secrecy of the transmitted message.

 
There are potential applications of such codes in the presence of myopic adversaries in a variety of settings beyond vanilla network coding -- for instance, distributed storage~\cite{pawar2011securing}, secret-sharing~\cite{shah2013secure}, private information retrieval~\cite{wang2018epsilon}, and coded computing~\cite{yu2019lagrange}.
 
 


\section{Preliminaries}\label{sec:Model and Subspace code}

For ease of presentation, in this paper we consider a unicast network coding problem in the presence of a myopic adversary -- as is common in the network error-correction literature, the techniques we develop directly translate to multicast settings as well.

\subsection{Channel Model}

\label{sec:model}

We consider the problem of communicating reliably through a network in the presence of a myopic adversary. In a nutshell, a sender Alice wants to send a message to a receiver Bob through the network. An adversary James eavesdrops on a subset of the links in the network and can jam another subset of the links -- in particular, James can decide which subset of links to jam, and how to jam them, based on his observations from the eavesdropped links. The goal is for Bob to be able to reliably reconstruct Alice's message despite James' jamming action. 
The network model is depicted in Figure~\ref{fig:problem_model}. 
\begin{figure}[htbp]
    \centering
    \begin{tikzpicture}[node distance=2.3cm]
         
    \node (X)  {$X$};
    \node[rectangle, minimum width = 2cm] (network) [right = 1.5cm of X] {Network};
    \node (Y) [right = 1 cm of network] {$Y$};
    \node (X1) [right = 1.2 cm of Y] {$\hat X$};
    \node (Z) [below = 1.5 of network.south west] {$Z$};
    \node (S) [right of = Z] {$S$};
    
    \path[->]
    ([xshift = 0.5cm] network.south west) edge node[anchor=east, font = \footnotesize] {$\substack{\mbox{Adversary's}\\ \mbox{observation}}$} (Z)
    (S) edge node[anchor=west, font = \footnotesize] {$\substack{\mbox{Corrupted}\\ \mbox{packets injected}}$} ([xshift = -0.3cm] network.south east)
    (Z) edge node[anchor=south, font = \footnotesize] {Choice of} node[anchor=south, font = \footnotesize, below] {jamming packets} (S);
    
    \draw [->] (X) -- node[anchor=south, font = \footnotesize] {Encoder} (network);
    \draw [->] (network) --  (Y);
    \draw [->] (Y) -- node[anchor=south, font = \footnotesize] {Decoder} (X1);
    \end{tikzpicture}
    \caption{\it \small {}Network Model: 
    The sender inputs packets corresponding to the rows of  matrix $X$ into the network. The adversary eavesdrops on a set of $z_r$ links leading to his observation matrix $ Z $.
    Based on $Z$, he jams another set of links by injecting into the network a matrix $ S $. 
    The receiver receives the corrupted packets $Y$, corresponding to linear combinations of $X$ and $S$. The communication goal is for the receiver to recover the transmitted $X$ with high probability.} 
    \label{fig:problem_model}
\end{figure}
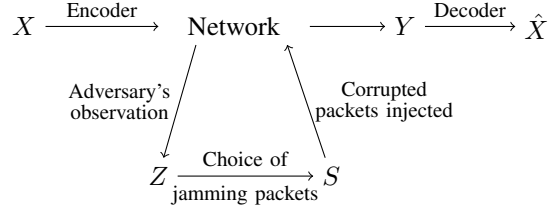
The detailed model is explained next.


\noindent {\bf Notational conventions:}
We use $\mathbb{F}_q$ to denote finite fields of size $q$ for prime powers $q$, and $\mathbb{F}_q^n$ or $(\mathbb{F}_q)^n$ to denote the vector space of $n$-tuples over $\mathbb{F}_q$. Scalars and scalar functions will be denoted by lower-case alphabets (e.g. $m$). Matrices will be denoted by upper-case alphabets (e.g. $X$) -- two exceptions as nods to firmly established convention will be the scalar quantity $\capacity$ denoting the min-cut of a graph of interest, and $U(\cdot)$ denoting the uniform distribution over a set. The row-space of any given matrix $X$ will be denoted $V(X)$, and $\oplus$ and $\cap$ respectively denote the direct sum and intersection (of vector spaces). Sets and graphs will be denoted by calligraphic letters (e.g. ${\cal C}$). The notation $\mathbbm{1}(.)$ indicates the indicator function of the corresponding event, and the notation $(x)^+$ denotes $\max\{x,0\}$.
Frequently used notation throughout this paper is listed in Table~\ref{tab:notation} for reference. 
\begin{table}[htbp]
\renewcommand{\arraystretch}{1.2}
\centering
	\begin{tabular}{|l|l|}
		\hline
		\textbf{Symbol} & \textbf{Meaning}\\ \hline
		$\capacity$   &        Min-cut of the Network         \\ \hline
		$n$   &        Length of the packet             \\ \hline
		$q$   &        Size of the finite field         \\ \hline
		$\rate$   &        Rate         \\ \hline
		$z_{ro}$ &        Eavesdropping-only power         \\ \hline
		$z_{rw}$ &        Eavesdropping and Overwriting power         \\ \hline
		$z_{wo}$ &        Overwriting-only power         \\ \hline
		$z_r$ &        Total eavesdropping power  $=z_{ro}+z_{rw}$       \\ \hline
		$z_w$ &        Total overwriting power $=z_{wo}+z_{rw}$         \\ \hline
		$\epsilon$   &    Rate slack -- small positive constant         \\ \hline

	    $\mathcal{C}$   &        Codebook             \\ \hline
	\end{tabular}
	\caption{Important notation }
	\label{tab:notation}
\end{table}


\noindent {\bf Network model:} 
The network ${\cal N}$ is a directed acyclic graph\footnote{The scenario where the network has cycles is considerably more complex~\cite{yang2013network}, as it involves some level of feedback -- we do not consider it here.} comprising of a {\it vertex-set} ${\cal V}$ of nodes and an {\it edge-set} ${\cal E}$ of directed links. Each node in ${\cal V}$ can manipulate information on incoming links to generate messages on outgoing links. Each link in ${\cal E}$ is assumed to have unit capacity, corresponding to the ability to transmit a single length-$n$ vector over some finite field ${\mathbb{F}}_q$ over a suitable period of time.~\footnote{Again, as is standard in the network coding, if links have unequal capacities, this can be handled by splitting such links into parallel links of unit capacity.} The {\it block/packet-length} $n$,  and {\it field-size} $q$ are design parameters to be specified later. Nodes in the network may perform arbitrary arithmetic operations on incoming packets to generate outgoing packets. The min-cut of the network is denoted by $\capacity$.

\noindent {\bf Encoder model:}
The sender Alice, situated at the {\it source node}, has a {\it message} $m$ that is uniformly distributed over the set $\left [ q^{n\rate} \right ]$  (the {\it rate} $\rate$ is a design parameter to be specified later).

Alice's goal is to reliably communicate her message $m $ to the receiver Bob situated at the {\it sink node}. To instantiate this communication she uses her {\it encoder} $Enc:  \left [ q^{n\rate}\right ]\rightarrow ({\mathbb F}_q)^{\capacity \times n}$ to choose a  {\it codeword} $X$ (a $\capacity \times n$ matrix over ${\mathbb{F}}_q$) for each message $m \in \left [ q^{n\rate} \right ]$.\footnote{Note that in contrast to some prior work in the secure/reliable network coding literature (for instance~\cite{JagLKHME:08}), this is a {\it deterministic} mapping from each message $m$ to corresponding codeword $X$ -- it turns out that our schemes do not need to rely on additional stochasticity/randomness.} The collection of all such codewords $X$ comprises the {\it codebook} ${\cal C}$. Prior to communication, this codebook ${\cal C}$ and the corresponding encoding (mapping from messages $m$ to codewords $X$) is known to each of Alice, Bob, and the adversary James described below.

We assume that prior to communication, Alice, Bob and the intermediate nodes do not know the network topology (though they know the value of the min-cut $\capacity$), nor do they know the network coding operations performed by intermediate nodes.\footnote{We make these model choices to demonstrate that our coding scheme is able to operate despite knowing very little about the network setting {\it a priori}. In addition, the converse argument we outline in Section~\ref{sec:converse} goes through even if Alice/Bob/intermediate nodes had prior knowledge of the topology/intermediate coding operations, so such an assumption is not unduly restrictive.} In the scheme we present these intermediate nodes perform {\it random linear network coding}~\cite{ho2006random}, though the matching converse argument we outline in Section~\ref{sec:converse} does not rely on specific coding scheme.

\noindent  {\bf Network communication scheme:}
The {\it network communication scheme} ${\cal S}$ comprises of Alice's encoder $Enc$, the  (linear) coding operations of nodes in ${\cal V}$, and Bob's decoder $Dec$ as described below.

\noindent  {\bf Adversarial model:} In addition to knowing Alice's encoding strategy/codebook ${\cal C}$, the malicious adversary James knows the network topology, Bob's decoding strategy, and the coding performed at intermediate nodes. James' goal is to try to disrupt the communication from Alice to Bob in a manner so that Bob is unable to reliably estimate Alice's message $m$.
To instantiate this disruption, as a function of his knowledge, he can pick a subset of  
links to control in the manner described below. 
James's power is parametrized by his {\it adversarial power-tuple} $(z_{ro},z_{wo},z_{rw})$, characterized as the following: 
\begin{itemize}
    \item First, he can read (without changing) the data transmitted on the set of $z_{ro}$ ``read-only'' links of his choice.
    \item Next, on another set of $z_{rw}$ ``read-write'' links of his choice, he can read the transmitted data, and then overwrite the transmissions on these links with an arbitrary set of $z_{rw}$ length-$n$ vectors (these vectors may depends on James' observations on all $z_{ro}+z_{rw}$ links).
    \item Finally, on a set of $z_{wo}$ ``write-only'' links of his choice, James can replace the contents of these links with an arbitrary set of $z_{wo}$ length-$n$ vectors (these vectors may depends on James' observations on all $z_{ro}+z_{rw}$ links, but {
\it not} on the contents of the $z_{wo}$ write-only links).
\end{itemize}
For notational convenience, we also define the ancillary parameters $z_r$, $z_w$ and $z$ as follows:
\begin{itemize}
    \item The {\it number of eavesdropped links} $z_r$ is set to equal $z_{ro}+z_{rw}$, corresponding to the total number of links James can eavesdrop on.
    \item The {\it number of jammed links} $z_w$ is set to equal $z_{wo}+z_{rw}$, corresponding to the total number of links James can jam.
    \item The {\it number of corrupted links} $z$ is set to equal $z_{ro}+z_{wo}+z_{rw}$, corresponding to the total number of links James can read and/or write on.
\end{itemize}

In more detail, let the {\it eavesdropper's observation matrix} $Z$ be the $z_{r}\times n$ matrix over ${\mathbb{ F}}_q$ whose rows comprise of James' observations on the $z_{r}$ links he can eavesdrop on. As noted above, since in this model it suffices to restrict the operations performed by intermediate notes to (random) linear network coding operations, $Z$ equals $T_{AJ}X$. Here the {\it network transform from Alice to James} $T_{AJ}$ is a $z_{r}\times \capacity$ matrix over ${\mathbb{ F}}_q$ corresponding to the linear  transform of $X$ instantiated by the network coding operations by nodes upstream of James.  

Further, let the {\it jamming matrix} $S$ be the $z_{w}\times n$ matrix over ${\mathbb{ F}}_q$ which comprises of James' jamming patterns $ S_{wo}\in\mathbb{F}_q^{z_{wo}\times n} $ on the $ z_{wo} $ links he can jam and $ S_{rw}\in\mathbb{F}_q^{z_{rw}\times n} $ on the $z_{rw}$ links he can eavesdrop on and jam. Then $S$ is a function of $Z$ (and in addition James' knowledge of the network topology, and Alice's encoder $Enc$ above, the network coding operations, and Bob's decoder $Dec$ described below). On a few occasions below we use the notation $S=Jam_{\cal S}(Z)$ instead of $S$, to make explicit the dependence of the jamming matrix $S$ on the eavesdropper's observation matrix $Z$ -- here $Jam_{\cal S} : ({\mathbb F}_q)^{z_r \times n} \rightarrow ({\mathbb F}_q)^{z_w \times n}$ can be interpreted as James' {\it jamming function}. James is unconstrained in his choice of jamming functions.\footnote{Indeed, he can even choose probabilistic jamming functions. However, as shown in~\cite{dey2019sufficiently}, given any probabilistic jamming function $Jam_{\cal S}$ with a given probability of decoding error (as defined in~\eqref{eq:prob-error}), there exists a deterministic jamming function $Jam'_{\cal S}$ with at least the same probability of decoding error. Hence, without loss of generality, we focus here on deterministic jamming functions.}

\begin{remark}
Strictly speaking, the jamming matrix $ S\in\mathbb{F}_q^{z_w'\times n} $ can have smaller dimension $ z_w'\le z_w $. 
In particular, it can comprise of two sub-matrices $ S_{wo}\in\mathbb{F}_q^{z_{wo}'\times n} $ and $ S_{rw}\in\mathbb{F}_q^{z_{rw}'\times n} $ for some $ z_{wo}'\le z_{wo} $ and $ z_{rw}'\le z_{rw} $.
However, we focus on the worst case where the adversary uses his full power. 
Other cases can be reduced to this case by treating $ (z_{ro}',z_{wo}',z_{rw}') $ that James truly used as the new adversarial power tuple.
\end{remark}

Prior to communication, the locations of these $z_{ro}$, $z_{wo}$, and $z_{rw}$ links among the edge-set ${\cal E}$ are unknown to Alice/Bob/intermediate nodes (though the {\it values} of $z_{ro}$, $z_{wo}$, and $z_{rw}$, or good upper bounds on these, are available to Alice and Bob).


Following the lead of~\cite{converse}, James is called a {\it weak adversary} if the condition in Eqn.~\eqref{eq:weak-adversary} is satisfied.
\begin{equation}
\capacity>z_{ro}+2z_{w}\label{eq:weak-adversary}
\end{equation}

A pictorial explanation of the adversarial model considered in this work is shown in Fig.~\ref{fig:adversary_network}.
\begin{figure}[htbp]
    \centering
    \begin{tikzpicture}
    \newcommand{\asymcloud}[2][.1]{%
\begin{scope}[#2]
\pgftransformscale{#1}%
\pgfpathmoveto{\pgfpoint{261 pt}{115 pt}} 
  \pgfpathcurveto{\pgfqpoint{70 pt}{107 pt}}
                 {\pgfqpoint{137 pt}{291 pt}}
                 {\pgfqpoint{260 pt}{273 pt}} 
  \pgfpathcurveto{\pgfqpoint{78 pt}{382 pt}}
                 {\pgfqpoint{381 pt}{445 pt}}
                 {\pgfqpoint{412 pt}{410 pt}}
  \pgfpathcurveto{\pgfqpoint{577 pt}{587 pt}}
                 {\pgfqpoint{698 pt}{488 pt}}
                 {\pgfqpoint{685 pt}{366 pt}}
  \pgfpathcurveto{\pgfqpoint{840 pt}{192 pt}}
                 {\pgfqpoint{610 pt}{157 pt}}
                 {\pgfqpoint{610 pt}{157 pt}}
  \pgfpathcurveto{\pgfqpoint{531 pt}{39 pt}}
                 {\pgfqpoint{298 pt}{51 pt}}
                 {\pgfqpoint{261 pt}{115 pt}}
\pgfusepath{fill,stroke}         
\end{scope}}  

\node (cloud) at (0,0) {\tikz \asymcloud[.25]{fill=white!20,thick};};

\node at (0,-1.7) {Network};
\node (james) at (-0.5,2) {James};
\filldraw[black](-3,0) circle (2.5pt) node[anchor=east,left = 0.1 cm] {Alice};
\coordinate (alice) at (-3,0);

\node[circle,radius = 2pt, draw = black] (l1) at (-1,0.5) {};
\node at (-1,0.15) {$\vdots$};
\node[circle,radius = 2pt, draw = black] (l2) at (-1,-0.4) {};
\node[circle,radius = 2pt, draw = black] (l3) at (-1,-1) {};
\node[circle,radius = 2pt, draw = black] (l4) at (-1,-1.6) {};

\node[circle,radius = 2pt, draw = black] (21) at (0,0.5) {};
\node at (0,0.15) {$\vdots$};
\node[circle,radius = 2pt, draw = black] (22) at (0,-0.4) {};
\node[circle,radius = 2pt, draw = black] (23) at (0,-1) {};

\node (dot) at (0.6,0) {$\vdots$};

\node[circle,radius = 2pt, draw = black] (31) at (1.2,0.8) {};
\node at (1.2,0.35) {$\vdots$};
\node[circle,radius = 2pt, draw = black] (32) at (1.2,-0.2) {};
\node[circle,radius = 2pt, draw = black] (33) at (1.4,-0.7) {};
\node[circle,radius = 2pt, draw = black] (34) at (1.8,-1) {};

\filldraw[black] (3,0.8) circle (2.5pt) node[anchor=west,right = 0.1 cm] {Bob};
\coordinate (bob) at (3,0.8);

\path [->]
(alice) edge[color = blue] (l1)
        edge (l2)
        edge (l3)
        edge (l4)
(l1)    edge (21)
        edge (22)
(l2)    edge[color = blue] (21)
        edge[color = green!50!black] (23)
(l3)   
        edge[color = green!50!black] (22)
        edge (23)
(l4)    edge (21)
        edge (22)
        edge (23)
(21)   edge (dot)
(22)   edge (dot)
(23)   edge[color = red] (dot)
(dot)   edge (31)
        edge (32)
        edge (33)
        edge [bend right] (34)
(31)   edge (bob)
(32)   edge (bob)
(33)   edge[color = red] (bob)
(34)   edge (bob);

\draw[very thick, ->] (james) -- (-0.5,1);



    \end{tikzpicture}
    \caption{\it \small{ James' attack in view of the network. The {\color{blue}blue} edges represent those that James can eavesdrop. The {\color{red}red} edges represent those that James can jam. The {\color{green!50!black}green} edges represent those that James can both eavesdrop and jam. }}
    \label{fig:adversary_network}
\end{figure}

\begin{remark}
As noted in Section~\ref{sec:intro}, the significance of the inequality in Eqn.~\eqref{eq:weak-adversary} is that this is precisely the parameter regime where the following happens -- say Alice chooses a random code of rate $\capacity-z_w-
\epsilon$, then there is still an exponentially large
set of codewords in ${\cal C}$ that are consistent with James' observation $Z$. Roughly speaking, in~\cite{dey2019sufficiently}, whose approach we adapt in this work, the corresponding parameter regime is called the {\it sufficiently myopic} regime.
\end{remark}

\begin{remark}~\label{rem:add-ow} The distinction between the additive error models in ~\cite{YaoSJL:14,hayashi2017secrecy}, and the {\it overwrite} model considered in this work (wherein James can replace the contents of packets on the links he can jam with whatever he wishes) shows itself in the $z_{wo}$ links. In additive models, if James has uncertainty about what is being transmitted on a link, he will still have uncertainty after he jams this link. In contrast, in the overwrite model we consider in this work, the content of packets on links James corrupts is always precisely known to him, since he replaces the prior contents with his injected corruptions (even if he has uncertainty about the contents of the links he is corrupting). Arguably, the overwrite error model of this paper is a more natural fit for a variety of wired/distributed computing/storage models than the additive model (which can perhaps be motivated more in wireless settings).
\end{remark}


\begin{remark}
We wish to emphasize that anything that Alice and Bob (and/or intermediate nodes) know prior to communication, James also knows -- hence no {\it shared keys/common randomness} is shared privately between  Alice  and  Bob -- in this regard we differ from some models in the literature, such as the ``Shared Secret'' model in~\cite{JagLKHME:08}, or the model of~\cite{tian2016arbitrarily}. Also, we do not assume computational bounds on James (unlike, for instance, the models of~\cite{charles2006signatures} or~\cite{zhao2007signatures}).
\end{remark} 

\noindent {\bf Decoder model:} We represent the information on the links incoming to the sink by the {\it network output} $Y$, a $\capacity\times n$ matrix over ${\mathbb F}_q$.~\footnote{A natural question pertains to scenarios where there are more than $\capacity$ packets incoming to the sink. It can be shown via standard arguments that 
with high probability over the random linear network code design, there are at least $\capacity$ linearly independent vectors on the links incoming to the sink. As is common in the network error-correction literature (see for instance~\cite{JagLKHME:08}), if there are more than $\capacity$ linearly independent vectors, we choose an arbitrary subset of size $\capacity$ and discard the remainder. As we show, in the weak adversary regime when Alice is transmitting at rate $\capacity-z_w-\epsilon$, Bob is still able to reconstruct Alice's message with high probability. Conversely, via standard information-theoretic arguments, if Alice is transmitting at rate higher than $\capacity-z_w+\epsilon$ and James injects random noise on $z_w$ links situated in a min-cut, {\it every} communication scheme will have a probability of error converging to $1$. Hence no loss of performance arises from this discarding operation. 
} Given this $Y$ and his knowledge of Alice's codebook ${\cal C}$, the goal of Bob's decoder $Dec: ({\mathbb F}_q)^{\capacity \times n} \rightarrow \left [ q^{n\rate}\right ]$  is to ensure that its output $\hat{m}$ is a ``reliable estimate" (as made precise next) of Alice's message $m$.

\noindent {\bf Code properties:}
Bob's decoder $Dec$ is said to make an error if the decoder output $\hat{m}$ differs from Alice's message $m$. 

For a given network communication scheme ${\cal S}$, the (average)
{\it probability of decoding error} is defined as in Eqn.~\eqref{eq:prob-error} {where the expectation is over Alice's uniformly distributed message $ m\sim U([q^{nR}]) $ and the random linear network coding operations at the intermediate nodes}.
\begin{equation}
\max_{Jam_{\cal S}}
    \mathop{\mathbb{E}}
    \left ( \frac{\sum_{X' \in {\cal C} : T_{AJ}X' = Z} \mathbbm{1}\left (Dec(Y(X',Jam_{\cal S}(Z))) \neq m \right )}{|\{X' \in {\cal C} : T_{AJ}X' = Z\}|} \right )
\label{eq:prob-error}
\end{equation}

In words, the meaning~\eqref{eq:prob-error} can be unwrapped as follows. Say Alice has message $m$ (the notation $m \sim U\left (\left [ q^{n\rate}\right ]\right )$ means that $m$ is uniformly distributed among all possible messages), resulting in the codeword $X = Enc(m)$. For the given network communication scheme ${\cal S}$ (that James knows) his observation matrix equals $Z = T_{AJ}X$, and based on this observation and the communication scheme James chooses a corresponding jamming function $Jam_{\cal S}$, resulting in the jamming matrix $S = Jam_{\cal S}(Z)$. Note that there will in general be multiple possible codewords $X'$  in Alice's codebook ${\cal C}$ such that $T_{AJ}X'$ equals James' observation $Z$ -- call them {\it $Z$-compatible} codewords. Then, for a specific jamming function $Jam_{\cal S}$ and message $m$, the fraction of $Z$-compatible codewords $X'$ that result in Bob's decoder making an error is the probability of error. For a specific jamming function $Jam_{\cal S}$, the average probability of error is the average of the previous quantity over all messages $m$. Finally, since James' jamming function $Jam_{\cal S}$ can be arbitrary (he is after all a malicious adversary), this probability of error quantity is maximized over all possible jamming functions.

A rate $\rate$ is said to be {\it achievable} if for any $\epsilon > 0$ there exists a network communication scheme over some (sufficiently large) $n$ and $q$ such that the probability of decoding error is no more than $\epsilon$. The {\it network error-correction capacity} $\rate^{\ast}$ for a given network ${\cal N}$ and  adversarial power-tuple $(z_{ro},z_{wo},z_{rw})$ is then the supremum (over network communication schemes) of achievable rates.



\noindent {\bf Secrecy model:} 
When secrecy is to be satisfied, our codes attain perfect secrecy \cite{cai2002secure,el2012secure}. Let $m$ be 
the transmitted message, let $X$ be the 
symbols communicated through the network and let $Z$ be 
James' observation. \emph{Information theoretic secrecy} (a.k.a. \emph{perfect secrecy}) requires that James' uncertainty about the message $m$ is not reduced after his observation, i.e., $\textrm{H}(m|Z) = \textrm{H}(m)$, where $\textrm{H}(.)$ is the entropy function and all logarithms are base $q$. This is in contrast to \emph{strong} and \emph{weak secrecy} in which it is required that $\textrm{H}(m|Z) = \textrm{H}(m) - \epsilon_n$ for a small $\epsilon_n$ that either goes to $0$ when the block length $n$ goes to infinity (strong secrecy) or $\epsilon_n / n$ goes to $0$ when $n$ goes to infinity (weak secrecy).

A rate $\rate_{\text{sec}}$ is said to be {\it securely achievable} if for any $\epsilon > 0$ there exists a network communication scheme over some (sufficiently large) $n$ and $q$ such that the probability of decoding error is no more than $\epsilon$ and perfect secrecy of the transmitted message is maintained. The {\it secure network error-correction capacity} $\rate^{\ast}_\text{sec}$ for a given secure network ${\cal N}$ and  adversarial power-tuple $(z_{ro},z_{wo},z_{rw})$ is then the supremum (over network communication schemes) of securely achievable rates.

To prove the strongest possible results, we provide perfect secrecy when constructing codes, and consider weak secrecy for proving a converse on the error-correction capacity of secure networks. We show that those values are equal, i.e., the converse that holds even for weak secrecy can be achieved while maintaining perfect secrecy.

\subsection{Subspace Codes}
In our scheme, Alice's encoder and Bob's decoder will depend critically on certain properties of the row-spaces of the matrices $X$ in ${\cal C}$. It will therefore help to quickly review the extensive literature on {\it subspace codes} (see for instance the review in~\cite{KK08}).

 The set of all subspaces of $\mathbb{F}_q^n$ is called the projective space 
of order $n$ over $\mathbb{F}_q$, denoted as $\mathcal{P}_q (n)$. The set of all $k$-dimensional subspaces of $\mathbb{F}_q^n $ is called a Grassmannian, denoted as $\mathcal{G}_q (n,k)$, where $0 \leq k \leq n$. 
A graph representation of the Grassmannian is shown in Fig.~\ref{fig:subfig-grassmannian}.
\begin{figure}[htbp]
    \centering
    \includegraphics[width = 0.5\linewidth]{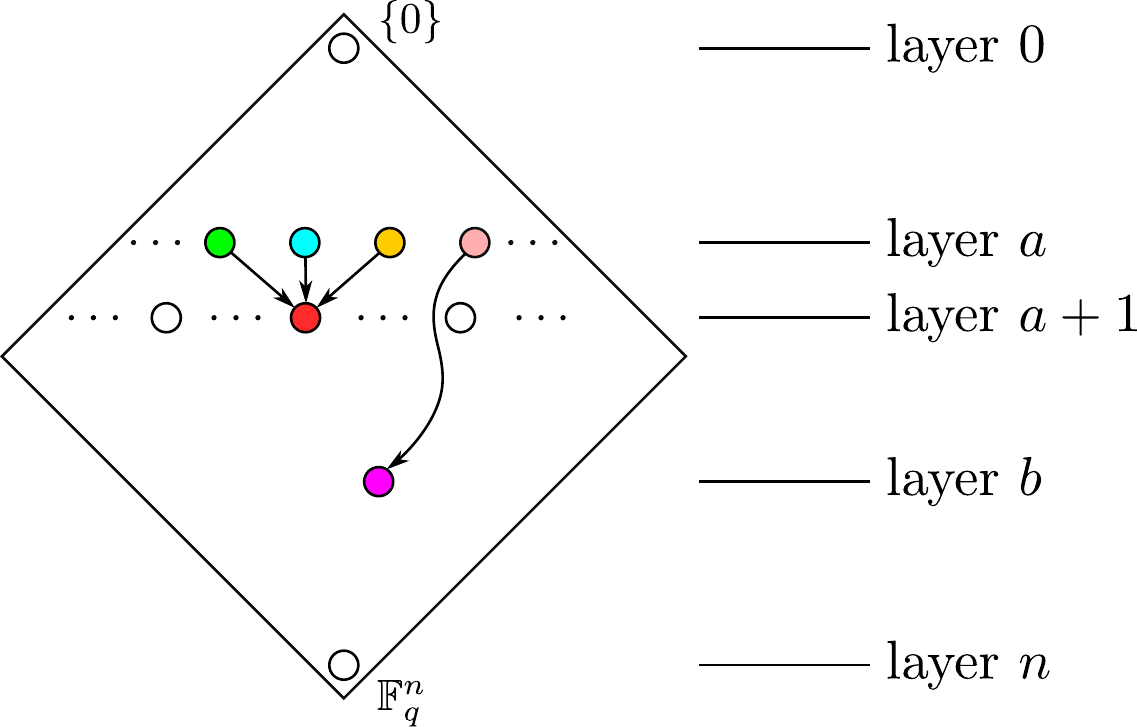}
    \caption{\it \small{A graph representation of the Grassmannian. 
    Nodes in layer $a$ are $a$-dimensional subspaces. An arrow connecting two subspaces in two adjacent layers means that the $(a+1)$-dimensional subspace in layer $a+1$ contains the $a$-dimensional subspace in layer $a$. For example, the {\color{red}red} subspace contains the {\color{brightgreen}green}, {\color{aqua}blue} and {\color{carrotorange}yellow} subspaces. 
     We say two subspaces in layers $a$ (the subspace in {\color{babypink}light pink}) and $b$ (the subspace in {\color{brightpink}dark pink}), $b>a+1$, are connected if there exists a path (series of arrows) connecting the subspace in layer $a$ to the subspace in layer $b$. }}
    \label{fig:subfig-grassmannian}
\end{figure}

It is known~\cite{metric} that the Gaussian coefficient defined as
\begin{align*}
    \binom{n}{k}_q & \triangleq \prod_{i = 0}^{k-1} \frac{q^n-q^i}{q^k - q^i},
\end{align*}
measures the cardinality of the Grassmannian $\mathcal{G}_q (n,k)$. The value of $\binom{n}{k}_q$ is bounded between $q^{kn-k^2}$ and $4q^{kn-k^2}$ \cite[Lemma~4]{KK08}. Throughout the paper, we will use those values to bound the Gaussian coefficient from below and from above, respectively. 

A subspace code is a non-empty collection of subspaces of $\mathbb{F}_q^n$. Hence
a subspace codeword is a subspace in the collection.
However, in the network communication model outlined in the previous section, codewords correspond to $\capacity 
\times n$ matrices. To be able to use the nice machinery of subspace codes, we identify any given subspace of dimension $k$ with the unique Reduced Row Echelon Form (RREF) $k \times n$ matrix  $X$ whose row-space $V(X)$ equaling the given subspace. 
Subspace codes such that each subspace in the code is of the same dimension is called a constant-dimension code. The distance function $d(.,.)$ we use %
is the {\it injection distance} between subspaces \cite{metric}, where the distance between any two subspaces $V$ and $V'$ is expressed as
\begin{equation}
    d(V,V') = \max\{\dim(V),\dim(V')\} - \dim(V\cap V'). \label{eqn:distance-metric}
\end{equation}
It is shown in~\cite{metric} that this definition results in a metric.
The injection distance is depicted in view of the Grassmannian in Fig.~\ref{fig:injection-dist}. 
\begin{figure}[htbp]
    \centering
    \includegraphics[width = 0.7\linewidth]{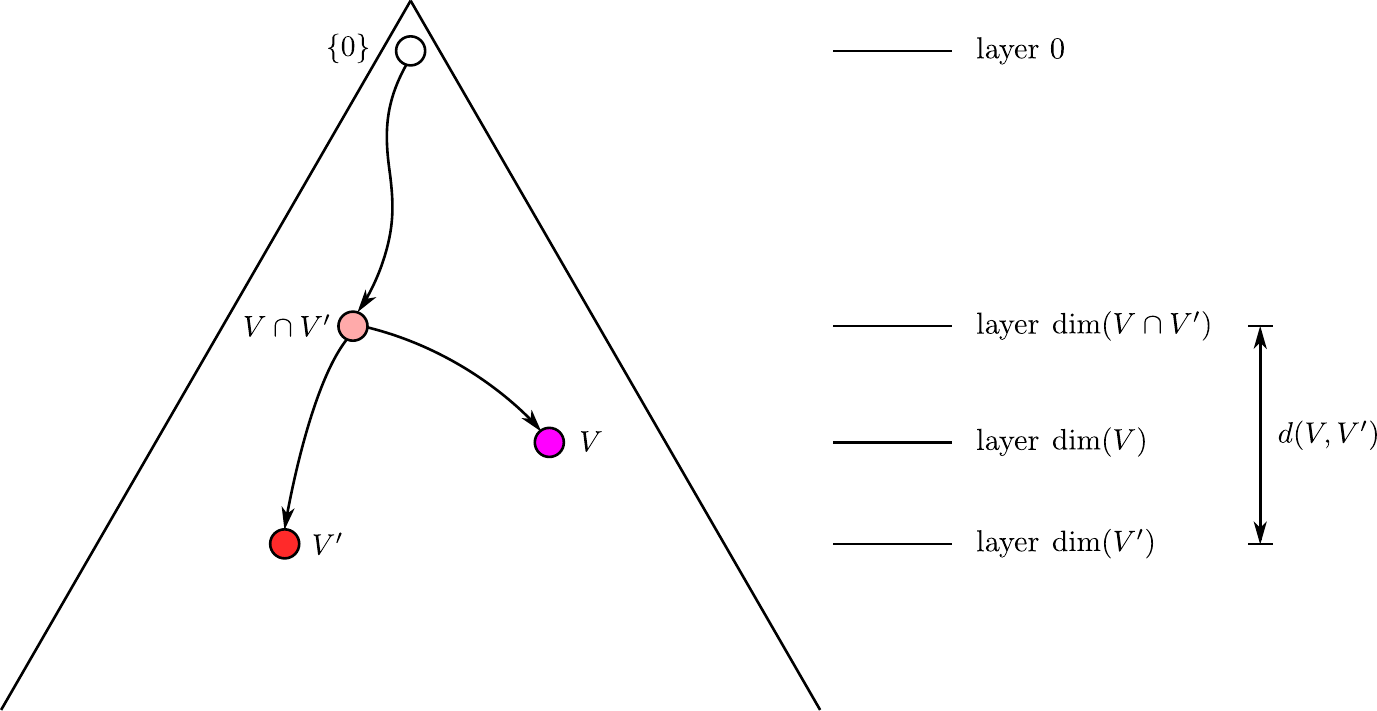}
    \caption{\it \small{Injection distance in the context of the Grassmannian graph.}}
    \label{fig:injection-dist}
\end{figure}

\subsection{Communication Scheme}

We now describe the specific encoding and decoding strategies in our scheme, and James' possible eavesdropping/jamming actions, all in the context of subspace codes over Grassmannians.

\subsubsection{Random code construction/Encoder}
We construct the codebook $\mathcal{C}$ by sampling $q^{n\rate}$ codewords (subspaces) uniformly at random from the Grassmannian $\mathcal{G}_q (n,\capacity)$. Given a message $m$ and the corresponding codeword/subspace, Alice's encoder then merely transmits the RREF matrix $X$ with row-space equaling the given subspace.


\subsubsection{Decoder} \label{decoder}
Bob uses a ``brute-force'' injection-distance decoder. The decoder measures the injection distance given in Eqn.~\eqref{eqn:distance-metric} between the received subspace $V(Y)$ and each codeword $V(X)$ in the codebook. If there is a unique codeword $\hat{X}$ such that $d(Y,\hat{X}) \leq z_w$, then the decoder outputs $\hat{X}$ as the transmitted codeword. Otherwise, the decoder outputs an error.
 
\begin{remark}
Using such an injection-distance decoder is not in general optimal for general codes (beyond subspace codes), since many different matrices may have the same row-space. Indeed, in some contexts, ignoring such degeneracy can result in loss of useful information. For instance, in~\cite{yao2012passive} a scheme that does {\it  not} collapse multiple matrices into a single subspace 
allows in some scenarios one to estimate the topology  of a given network and adversarial location. However, for our purposes in this work, where we are focused solely on the problem of characterizing the information-theoretically optimal rate of communication over networks containing myopic adversaries, exploiting the non-degeneracy of general codes (rather than subspace codes) does not asymptotically improve the throughput. And on the flip side, as has been noted in the literature in the past (for instance~\cite{KK08}), subspace codes have the pleasing property that they allow one to ignore the role that specific network topologies/linear network coding operations play in how information is transformed in the network, enabling significantly cleaner and easier analysis.
\end{remark}
 
\subsubsection{Adversarial Action}

\begin{figure}[htbp]
    \centering
    \includegraphics[width = \linewidth]{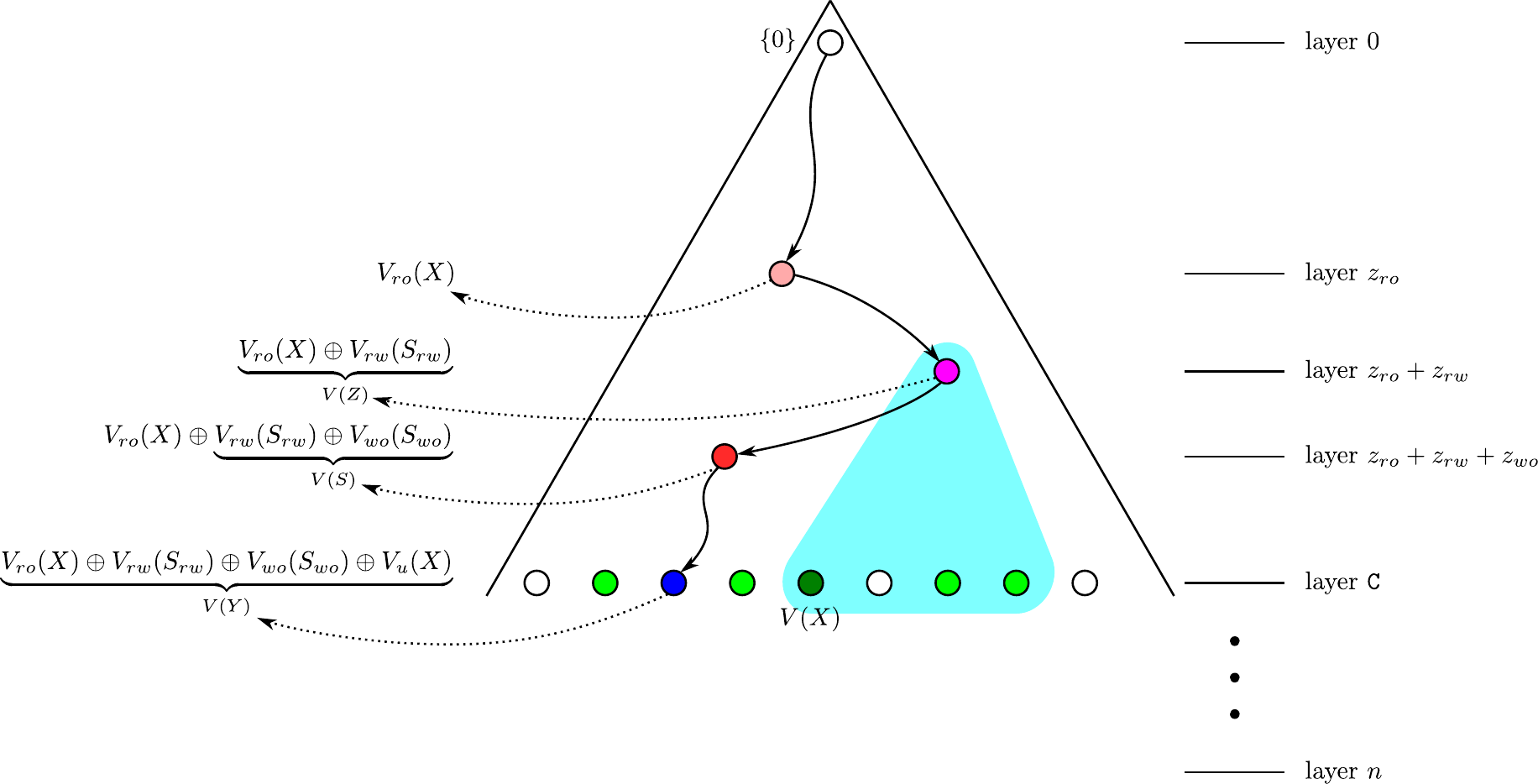}
    \caption{\it \small{James' attack in the context of the Grassmannian graph. Each layer here is a collection of subspaces over $\mathbb{F}_q$ of the same dimension. 
    The codebook consists of $ q^{n\rate} $ subspaces (the subspaces in {\color{brightgreen}light green} and {\color{cadmiumgreen}dark green}) on layer $ \capacity $. 
    The subspace in {\color{cadmiumgreen}dark green} denotes Alice's codeword $V(X)$. James has the power to first observe a subspace of Alice's codeword sitting on layer $z_{ro}+z_{rw}$ (corresponding to the subspaces in {\color{babypink} light pink} and {\color{brightpink} dark pink}). 
    Then, the number of codewords in layer $\capacity$ (the green subspaces in the {\color{aqua}light blue} shaded region) connected to James' observation $V(Z)$ (the subspace in {\color{brightpink}dark pink}) is exponential by Lemma \ref{James-uncertainty}. James can choose to overwrite at most $z_{rw}$ number of observed links (corresponding to the subspace in {\color{brightpink}dark pink}) and another $z_{wo}$ links (corresponding to the {\color{red}red} subspace). Overall, James can control the subspace $V(Y)$ (the subspace in {\color{smalt}dark blue}) received by Bob up to layer $z_{ro}+z_{rw}+z_{wo}$ by Lemma \ref{James' effect}. Our communication scheme works under the assumption given by Eqn.~\eqref{eq:weak-adversary}.}}
    \label{fig:subfig-subspaces-on-gr}
\end{figure}

James observes a $z_r$-dimensional subspace $V(Z)=V(T_{AJ}X)$ of $V(X)$ -- for notational convenience we henceforth denote this $V(Z)$ as $V_r(X)$. Based on this observation, he designs a subspace $V(S)$ of dimension not exceeding his jamming power $z_w = z_{wo}+z_{rw}$ to inject in the network. Let $V(S) = V_{wo}(S_{wo}) \oplus V_{rw}(S_{rw})$, where $V_{wo}(S_{wo})$ and $V_{rw}(S_{rw})$ respectively denote the subspaces inserted on the $z_{wo}$ write-only links and the $z_{rw}$ read-write links controlled by James. We represent the subspace $V(Y)$ received by Bob as $V_{ro}(X) \oplus V_{u}(X) \oplus V_{wo}(S_{wo}) \oplus V_{rw}(S_{rw}) $. Here $V_u(X)$ corresponds to the subspace in the direct-sum decomposition of the transmitted codeword $V(X)$ that  is neither seen nor overwritten by James.
James' adversarial action is depicted in Fig.~\ref{fig:subfig-subspaces-on-gr} in view of (the graph representation of) the Grassmannian. 

\subsection{Error Event}
We define the following to be the error event.
\begin{definition}
Consider a transmitted codeword $V(X)$ and a received subspace $V(Y)$. We say that an error happens if for some jamming action $V(S)$, there exits a codeword $V(\hat{X})\in \mathcal{C}$ such that $V(\hat{X}) \neq V(X)$ and $d(V(\hat{X}), V(Y)) \leq z_w$ {where $ V(Y) $ results from $ V(X) $ and $ V(S) $}.
\end{definition}

The probability of decoding error is the probability of finding a suitable subspace $V(S)$ 
{among all feasible jamming matrices 
} such that there exists a codeword $\hat{X}\neq X$ that satisfies $$d(V(\hat{X}), V(Y)) = d(V(\hat{X}), V_{ro}(X) \oplus V_{u}(X) \oplus V_{wo}(S_{wo}) \oplus V_{rw}(S_{rw}))\leq z_w.$$


\subsection{Main results}
\label{sec:main-results}

With the preliminaries out of the way, our main result is summarized in the following theorem.
\begin{theorem}
The network error-correction capacity $\rate^{\ast}$ of a network ${\cal N}$ with min-cut $\capacity$ and adversarial power-tuple $(z_{ro},z_{wo},z_{rw})$ equals
\begin{equation}
    \rate^{\ast} = \left \{
    \begin{array}{ll}
\cut-z_w & \mbox{ if } \cut>z_{ro}+2z_w,\\
(\cut-2z_w)^+ & \mbox{ otherwise.}
    \end{array}
    \right .
\end{equation}
\label{main theorem}
\end{theorem}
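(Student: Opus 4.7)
The converse and the strong-regime achievability are essentially already known. When $\capacity\le z_{ro}+2z_w$, the parallel-edge converse of~\cite{converse} gives $\rate^\ast\le(\capacity-2z_w)^+$, matched by the minimum-injection-distance subspace codes of~\cite{JagLKHME:08,silva2011universal,KK08}. When $\capacity>z_{ro}+2z_w$ (the weak regime), the upper bound $\rate^\ast\le\capacity-z_w$ follows from a one-line cutset-plus-Fano argument: James simply injects fresh uniform noise on $z_w$ links of some min-cut, reducing the effective channel to one with at most $\capacity-z_w$ clean dimensions. The new content is the matching achievability in the weak regime, which is what I sketch below.

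For this achievability I plan to analyze the random codebook $\cC\subseteq\cG_q(n,\capacity)$ of size $q^{n\rate}$ together with the brute-force injection-distance decoder of Section~\ref{sec:Model and Subspace code}, at rate $\rate=\capacity-z_w-\epsilon$. Given any deterministic jamming function $Jam_\cS$, transmitted codeword $V(X)$, observation $V(Z)=V_r(X)$, and jamming matrix $S=Jam_\cS(V(Z))$, the channel output is $V(Y)=V_{ro}(X)\oplus V_u(X)\oplus V_{wo}(S_{wo})\oplus V_{rw}(S_{rw})$, and the decoder errs iff some $V(\hat X)\ne V(X)$ in $\cC$ satisfies $d(V(\hat X),V(Y))\le z_w$.

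The core quantitative estimate is that, via the Gaussian-coefficient bounds of \cite[Lemma~4]{KK08}, the injection ball of radius $z_w$ around any fixed $V(Y)\in\cG_q(n,\capacity)$ contains at most $q^{z_wn+O(\capacity^2)}$ subspaces, so a uniformly random element of $\cG_q(n,\capacity)$ lies inside it with probability at most $q^{-(\capacity-z_w)n+O(\capacity^2)}$. Consequently, for any fixed $V(Y)$, the expected number of codewords in $\cC$ that confuse $V(Y)$ is $q^{n\rate-(\capacity-z_w)n+O(\capacity^2)}=q^{-n\epsilon+O(\capacity^2)}$, which is $o(1)$ for $n$ large. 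This already gives $\mathbb{E}_{\cC}[P_e(\cC,Jam_\cS)]\le\epsilon/2$ for every \emph{fixed} jamming function.

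To upgrade to a bound uniform over the doubly-exponential family of jamming functions, the plan is: (i) observe that the error event depends on $Jam_\cS$ only through the single pair $(V(Z),S)$ realized at the observation, so that the effective adversarial choice is over at most $q^{(z_r+z_w)n+O(1)}$ pairs; and (ii) under the weak-adversary hypothesis, which rearranges to $\capacity-z_r-z_w\ge z_{wo}+1>0$, the list $\cL(V(Z))=\{X\in\cC:V(Z)\subseteq V(X)\}$ has mean size $q^{n(\rate-z_r)+O(\capacity^2)}$, exponentially large once $\epsilon$ is small. A Chernoff tail bound on the fraction of codewords $X\in\cL(V(Z))$ that are confusable under the worst $S$ then provides concentration strong enough to absorb a union bound over all $(V(Z),S)$ pairs. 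Summing the bad-fractions over $V(Z)$ and derandomizing yields a deterministic code achieving $\max_{Jam_\cS}P_e(\cC,Jam_\cS)\le\epsilon$, as required.

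\textbf{Main obstacle.} The principal difficulty is the two-stage reduction from the doubly-exponential $Jam_\cS$ family to the singly-exponential $(V(Z),S)$ parameterization, followed by Chernoff concentration sharp enough to overcome the resulting union bound of size $q^{(z_r+z_w)n}$. The weak-adversary inequality $\capacity>z_{ro}+2z_w$ is exactly the algebraic slack that makes the product of the list size $q^{n(\rate-z_r)}$, the per-codeword confusion probability $q^{-(\capacity-z_w)n}$, and the union-bound factor $q^{(z_r+z_w)n}$ combine to a vanishing quantity; equivalently, it is the precise slack by which randomized vanishing-error codes bypass the Network Singleton threshold $\capacity-2z_w$.
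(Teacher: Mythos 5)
Your converse and strong-regime arguments coincide with the paper's (cutset bound plus the parallel-edge converse of~\cite{converse}, matched by omniscient-adversary codes), and your weak-regime skeleton --- random Grassmannian code at rate $\capacity-z_w-\epsilon$, injection-distance decoder, reduction of the doubly-exponential family of jamming functions to the realized pair $(V(Z),S)$, and a union bound over these pairs after showing that only a tiny fraction of $Z$-compatible codewords can be confused --- is also the paper's architecture. The genuine gap is the single sentence ``a Chernoff tail bound on the fraction of codewords $X\in\cL(V(Z))$ that are confusable under the worst $S$ then provides concentration'': the confusability indicators of the list members are \emph{not} independent (each depends on the locations of all other codewords, including other list members), so no off-the-shelf Chernoff bound applies, and establishing exactly this concentration is the entire technical content of the paper. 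Two dependencies must be tamed. First, a single confusing codeword $V(\hat X)$ could a priori lie within injection distance $z_w$ of $V(Y_i)$ for \emph{many} list members simultaneously; ruling this out is the ``reverse list-decoding'' step (the paper's Lemmas~\ref{Type 1 reverse list decoding} and~\ref{Type 2 reverse list decoding}), which rests on the observation that James sees or controls at most $z_{ro}+z_{rw}+z_{wo}$ dimensions of $V(Y_i)$ (Lemma~\ref{James' effect}), so each $V(Y_i)$ retains $\capacity-z_{ro}-z_w$ fresh random dimensions and a fixed confuser succeeds per list member with probability about $q^{-n(\capacity-z_{ro}-2z_w)}$ --- small precisely when $\capacity>z_{ro}+2z_w$. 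This, and not the list size nor your closing ``product of three factors'' accounting (which as written evaluates to $q^{n(\rate-\capacity+2z_w)}=q^{n(z_w-\epsilon)}$, an exponentially \emph{large} quantity), is where the weak-adversary condition actually does its work. Second, confusing codewords lying \emph{inside} the compatible list are statistically entangled with the confusable ones, since both are drawn from the same ensemble conditioned on $V(Z)$; this is why the paper introduces the oracle-given set $\cM_{og}$, the deferred-decisions argument, and the $q^{n\epsilon_1/2}\times q^{n\epsilon_1/2}$ grid partition for its Type-II analysis.

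That your step cannot be a routine concentration claim is visible from the strong regime: there the symmetrization attack (pick a compatible $V(X')$ and overwrite $z_w$ links toward it) produces a single in-list codeword that simultaneously confuses essentially every list member, even though each \emph{marginal} per-codeword confusion probability is still exponentially small. Your expectation computation and the proposed Chernoff step use only $\rate<\capacity-z_w$ and would go through verbatim when $\capacity\le z_{ro}+2z_w$, ``proving'' rate $\capacity-z_w$ there and contradicting the $(\capacity-2z_w)^{+}$ converse. So the missing ingredients --- Lemma~\ref{James' effect}, the forward/reverse list-decoding decomposition, and the oracle-set/grid device for in-list confusers --- are not refinements but the essential content of the achievability proof.
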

The rate converse of $\cut - z_w$ in the weak adversary regime follows directly from information-theoretic arguments; the rate converse of $(\cut - 2z_w)^+$ in the strong adversary regime relies on a myopic symmetrization attack that James can carry out. The proofs of these converses are presented in Sec.~\ref{sec:converse}. The achievability proof is more involved -- a sketch is presented in 
Sec.~\ref{sec:sketch}
and a detailed proof is presented in~\ref{sec:detail}. 

Further, the following corollary can be derived directly via a standard coset coding argument. A detailed explanation is given in Sec.~\ref{sec:coset_code}.

\begin{corollary}
The secure network error-correction capacity $\rate^{\ast}_{\text{sec}}$ of a network ${\cal N}$ with min-cut $\capacity$ and adversarial power-tuple $(z_{ro},z_{wo},z_{rw})$ equals
\begin{equation}
    \rate^{\ast}_{\text{sec}} = \left \{
    \begin{array}{ll}
\cut-z_w - z_r & \mbox{ if } \capacity>z_{ro}+2z_w,\\
0 & \mbox{ otherwise.}
    \end{array}
    \right .
\end{equation}
\label{cor:secure_capacity}
\end{corollary}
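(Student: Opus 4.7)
The plan is to derive the corollary by layering a standard coset/wiretap-II argument on top of the reliable coding result of Theorem~\ref{main theorem}. For the achievability of rate $\cut - z_w - z_r$ in the weak adversary regime, the scheme is as follows. Alice designates a length-$n z_r$ block of her input to the rate-$(\cut - z_w)$ encoder of Theorem~\ref{main theorem} as uniformly random padding $K$, drawn independently of the message $m \sim U([q^{n(\cut - z_w - z_r)}])$, and transmits $X = Enc(m,K)$. Bob applies the decoder from Theorem~\ref{main theorem} to recover $(\hat m, \hat K)$ with vanishing probability of error and outputs $\hat m$; reliability is inherited directly. For \emph{perfect} secrecy, I would arrange the codebook in a nested coset form in the style of El~Rouayheb--Soljanin: the codewords associated with each message $m$ form a coset of an inner subcode that projects uniformly onto every $z_r$-dimensional linear image of a codeword. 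Concretely, replacing the purely random inner ensemble by a Maximum-Rank-Distance (MRD) coset on the $n z_r$ padding coordinates makes the induced distribution of $Z = T_{AJ} X$ exactly uniform given any message $m$ and any rank-$z_r$ eavesdropping transform $T_{AJ}$. This yields $\textrm{H}(m \mid Z) = \textrm{H}(m)$.

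For the converse in the weak adversary regime, consider any secure scheme at rate $\rate_{\text{sec}}$ whose decoding error tends to $0$ and whose weak secrecy parameter satisfies $\epsilon_n / n \to 0$. Fano's inequality together with weak secrecy give
\begin{align*}
n \rate_{\text{sec}} = \textrm{H}(m) &\le I(m;Y) + n\epsilon \\
&= I(m;Y \mid Z) + I(m;Z) + n\epsilon \\
&\le \textrm{H}(Y \mid Z) + \epsilon_n + n\epsilon.
\end{align*}
A cut-set argument applied conditionally on James's observation then upper-bounds $\textrm{H}(Y \mid Z)$ by $n(\cut - z_r - z_w) + o(n)$: conditioning on $Z$ removes $z_r$ dimensions of source entropy available across the min-cut, and the same information-theoretic reasoning underlying the $\cut - z_w$ converse of Theorem~\ref{main theorem} (James injecting random noise on $z_w$ min-cut links forces the reliable rate below $\cut - z_w$) removes a further $z_w$ dimensions. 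Dividing by $n$ and letting $n \to \infty$ yields $\rate^{\ast}_{\text{sec}} \le \cut - z_w - z_r$. The strong-adversary zero-rate claim is inherited directly from~\cite{converse}: the symmetrization-based converse proved there for parallel-edge networks applies to the general network model as well, since any general network with min-cut $\cut$ contains a parallel-edge sub-network of the same min-cut as the relevant bottleneck, and the adversary is free to concentrate his $z_w$ jamming links there.

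The principal obstacle I anticipate is the perfect-secrecy step, rather than the converse. A straightforward random coset code on top of the ensemble of Section~II.C delivers only weak secrecy $I(m;Z) \to 0$, whereas the corollary claims that exact zero leakage is achievable. To close this gap, the inner MRD refinement must be verified to preserve the injection-distance analysis that powers Theorem~\ref{main theorem} --- in particular, the lower bound on the number of $Z$-compatible codewords and the upper bound on the fraction of jamming actions that fool the decoder. Since the refinement only constrains the intra-coset arrangement while leaving the coset representatives drawn uniformly from $\mathcal{G}_q(n,\cut)$, I expect those probabilistic bounds to carry through with only cosmetic changes; pinning down that the MRD structure is compatible with the random linear network coding performed at intermediate nodes is the place where genuine care is required.
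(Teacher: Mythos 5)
Your achievability argument follows essentially the same route as the paper: an outer wiretap-II/coset precoder layered on the rate-$(\cut-z_w-\epsilon)$ subspace code of Theorem~\ref{main theorem}. The paper instantiates the outer layer with an MDS code over $\mathbb{F}_{q^n}$ of length $\rate$ and dimension $z_r$ in the style of~\cite{el2012secure}: the message indexes a coset, Alice sends a uniformly random coset element $s$ through the subspace code, and perfect secrecy holds provided no transfer matrix of $z_r$ links lies in the row space of the parity-check matrix $H$, which is ensured by taking the field size $q$ large enough. Two of your worries dissolve in this formulation. First, the secrecy layer cannot disturb the injection-distance reliability analysis: the coset code acts on the information symbols \emph{before} subspace encoding, the transmitted codeword remains uniform over the random codebook, so the error analysis of Theorem~\ref{main theorem} applies verbatim; no re-verification of the concentration lemmas is needed. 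Second, your claim that a plain coset code yields only weak secrecy is too pessimistic --- the MDS coset construction already gives perfect secrecy once the transfer-matrix condition holds, so the MRD refinement you propose, while a legitimate (indeed universal, rank-metric) alternative, is not required and introduces exactly the compatibility questions you flag without being forced to answer them.

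The genuine gap is in your self-contained weak-regime converse. The step $\textrm{H}(Y\mid Z)\le n(\cut-z_r-z_w)+o(n)$ is not a cut-set fact and is false as a generic inequality: for a deterministic jamming function, $S$ is determined by $Z$, so $\textrm{H}(Y\mid Z)\le \textrm{H}(X\mid Z)\approx n(\cut-z_r)$, and conditioning alone removes nothing corresponding to the further $z_w$ dimensions. The $-z_w$ term only materializes once a \emph{specific} adversarial strategy (random noise or symmetrization on $z_w$ links of a min-cut) is fixed and combined with Fano and the reliability requirement; moreover $z_r+z_w$ counts the $z_{rw}$ read-write links twice, and justifying that double counting is precisely the delicate part of the argument in~\cite{converse}. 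The paper avoids all of this by invoking the weak-secrecy converse of~\cite{converse} directly (with the attack adapted from the parallel-edge setting to a min-cut of the general network), and your strong-adversary zero-rate claim should be handled the same way: the assertion that a general network ``contains a parallel-edge sub-network'' is not by itself a valid reduction, since upstream and downstream mixing means it is the symmetrization attack, not the network, that must be transplanted onto the min-cut links.
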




\section{Analysis}\label{sec: proof}

We first prove Theorem~\ref{main theorem} in two parts. In the first part we prove the achievability of the rate $\cut-z_w-\epsilon$ using the proposed subspace codes. Then, we use the result of \cite{converse} to show that the optimal rate is indeed $\cut-z_w$ if $\cut> z_{ro}+2z_w$ and is $\cut-2z_w$ otherwise. Then, Corollary~\ref{cor:secure_capacity} follows immediately by coupling our subspace code with a coset code~\cite{el2012secure} for the achievability part. We explain the idea of coset coding and how it is coupled with our subspace code in Section~\ref{sec:coset_code}. The converse follows from~\cite{converse}.

We start with a sketch of the achievability proof of Theorem~\ref{main theorem} to provide  intuition.

\subsection{Sketch of the achievability proof}\label{sec:sketch}

Recall that the codebook is constructed by choosing $q^{n\rate}$ subspaces uniformly at random from the Grassmannian $\mathcal{G}_q (n,\capacity)$. The decoder decodes the received $V(Y)$ as explained in Section~\ref{decoder}. We will show that with high probability, reliable communication in the presence of a weak adversary is possible using the subspace code with rate $\rate = \capacity-z_w-\epsilon$.

Consider the subspace $V(X)$ transmitted by Alice. James observes a random subspace $V_r(X)$ of $V(X)$ of dimension $z_r$. Since  $z_r + z_w < \capacity - z_{wo}$, we can show that approximately $q^{n(z_{wo}+\delta_1)}$, for some $\delta_1 > 0$, codewords are $Z$-compatible with James' observation $V_r(X)$ 
(Lemma \ref{James-uncertainty}). 
In other words, from James' perspective, exponentially many codewords could have been transmitted by Alice.

\emph{Oracle-given Set:} For the ease of analysis, we give James more power by giving him extra information about the $Z$-compatible codewords. After Alice decides on $V(X)$, an oracle reveals to James a random set of $q^{n\epsilon_1}$ $Z$-compatible codewords, which includes the correct one. This set is referred to as the oracle-given set and is denoted by $\mathcal{M}_{og}$. Note that this only increase James' power by reducing the number of $Z$-compatible codewords.

The oracle-given set is generated as follows. For each potential $V_r(X)$ that may be seen by James, about $q^{n(z_{wo}+\delta_1)}$ codewords are $Z$-compatible. The oracle partitions these compatible codewords into $q^{n(z_{wo}+\delta_1 - \epsilon_1)}$ sets, each of size $q^{n\epsilon_1}$. 
After eavesdropping on $z_r$ links, James is given one of the previously constructed sets denoted by $\cM_{og}$ that contains the true codeword. Since each codeword is generated randomly and each partition is constructed randomly, each codeword in $\cM_{og}$ can be viewed as generated uniformly at random conditioned on James' observation.

Now we show that for all jamming actions, Bob will be able to decode $V(X)$ successfully with high probability. Consider a certain subspace $V(X)$ transmitted by Alice. James can jam at most $z_w$ dimensions of this subspace. Bob receives the subspace $V(Y) = V_{ro}(X) \oplus V_{u}(X) \oplus V_{rw}(S_{rw}) \oplus V_{wo}(S_{wo})$. A decoding error happens if and only if there is another codeword $V(\hat{X})\in \mathcal{C}$ such that $d(V(\hat{X}),V(Y))\leq z_w$. In this case, we say that the transmitted codeword $V(X)$ is confused by the codeword $V(\hat{X})$. We shall use the following terminology: $V(X)$ is a \emph{confusable} codeword and $V(\hat{X})$ is a \emph{confusing} codeword.

We show that for a fixed subspace $V(S)$ that James injects, only a small fraction of codewords in $\mathcal{M}_{og}$ can be confused by a codeword $V(\hat{X})\neq V(X)$ in $\mathcal{C}$. We divide the codebook into two parts: (i) the codewords outside the oracle-given set, i.e., in $\mathcal{C} \setminus \mathcal{M}_{og}$; and (ii) the codewords inside the oracle-given set $\mathcal{M}_{og}$. Based on a careful analysis on the Principle of Deferred Decisions\cite{noga2004probabilistic}, we may assume that the codewords in $\mathcal{C} \setminus \mathcal{M}_{og}$ are independent of those in $\mathcal{M}_{og}$. Then we can bound the number of confusing codewords in the two previously discussed parts. One more step is needed to conclude the analysis. We need to show that a confusing codeword in $\cM_{og}$ can confuse a small number of codewords. Thus, we can bound the total number of confusable codewords in $\mathcal{M}_{og}$ which in turns bounds the probability of error. Note that the randomness of all the concentration analysis provided in the proof comes from the codebook construction. 

For part~(i), we bound the number of confusable codewords in $\mathcal{M}_{og}$ with confusing codewords from $\mathcal{C} \setminus \mathcal{M}_{og}$. We first use a basic list-decoding argument to show that for a given $V(Y)$, the number of confusing codewords in $\mathcal{C} \setminus \mathcal{M}_{og}$ that can confuse $V(X)$ is no more than $n^3$ with probability $1-e^{-\frac{n^3}{3}}$ over code design. Then, we analyze the best $V(\hat{X})$ that James can hope for to confuse $V(X)$. We use the randomness in $V_{u}(X)$ to show that the best confusing $V(\hat{X})\in \cC \setminus \cM_{og}$ can confuse at most $n^3$ codewords in $\cM_{og}$ with probability $1-e^{-\frac{n^3}{3}}$. Hence, in part~(i) of the codebook at most $n^3 \times n^3 = n^6$ of the $q^{n\epsilon_1}$ possible $Z$-compatible codewords are confused by  {\it any} choice of $S$ by James. 

For part~(ii), we bound the number of confusable codewords in $\mathcal{M}_{og}$ with confusing codewords from $\mathcal{M}_{og}$. In this case, we cannot use a basic list-decoding argument directly since it strongly relies on the independence between the set of confusing codewords and the set of confusable codewords. To overcome this difficulty, we randomly partition $\mathcal{M}_{og}$ into a $q^{n_{\epsilon_1}} \times q^{n_{\epsilon_1}}$ grid. We use the independence among the codewords in rows/columns of this grid to argue that only an exponentially small fraction of codewords can be confusing codewords. Therefore, we can use the same analysis we used for part~(i) to bound the number of confusable codewords by $2q^{n\epsilon_1/2}n^3$ for each confusing codeword (the $2q^{n\epsilon_1/2}$ comes from taking a union bound over the $2q^{n\epsilon_1/2}$ rows + columns in the grid). This analysis results in bounding the probability (over code design) of having a small fraction of confusable codewords in $\cM_{og}$ by $1-e^{-\frac{n^3}{3}}$.

Overall, a code is said to be ``bad'' if there exists a transmitted matrix $X$ and  jamming matrix $S$ such that more than $n^6$ codewords are confused in part~(i) of the codebook, or more than $2n^6q^{n\epsilon_1/2}$ codewords are confused in part~(ii).

The claimed arguments for the two parts will give us a super-exponentially small probability of a ``bad code'' for every fixed $V(S)$ and $V(X)$. Therefore, taking a union bound over all possible $V_r(X)$, $V(S)$ and $V(X)$, we still have a small probability of error. Hence, we argue that with high probability, only a small fraction of codewords in $\cM_{og}$ are confusable codewords. Therefore the probability of error is also small since from James' perspective codewords from $\cM_{og}$ are each uniformly likely to have been transmitted.

\subsection{Detailed proof of the achievability}
\label{sec:detail}
We prove that reliable communication can be achieved with a random subspace code with rate $\rate = \capacity- z_w -\epsilon$ for some sufficiently small $\epsilon$. 
The source of randomness in our analysis is the random generation of the codebook.

First, we argue that based on James' observation, there is at least $\frac{1}{2}q^{n(z_{wo}+\delta_1)}$ $Z$-compatible codewords with high probability.

\begin{lemma}
Consider a random subspace code $\mathcal{C}$ of rate $\rate = \capacity-z_w -\epsilon$ used to transmit a message through a network. Let $\delta_1 \geq 1-\epsilon $. For a given subspace $V_r(X)$ of $V(X)$ that a weak adversary obtains by observing a random set of $z_r$ links of the network, the following holds%
\begin{equation*}
    \Pr\left(\text{\# of $Z$-compatible codewords with a given } V_r(X)\leq \frac{1}{2}q^{n(z_{wo}+\delta_1)}\right) \leq e^{-\frac{1}{8}q^{n(z_{wo}+\delta_1)}}.
\end{equation*}
\label{James-uncertainty}
\end{lemma}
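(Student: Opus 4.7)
The plan is to reduce the lemma to a standard Chernoff concentration bound on a binomial random variable whose randomness comes entirely from the i.i.d.\ sampling of the codebook. First I would fix any specific $z_r$-dimensional subspace $V_r \subseteq \mathbb{F}_q^n$ playing the role of James' observation $V_r(X)$, and condition on the event that the transmitted codeword $V(X)$ contains $V_r$. Because the $q^{n\rate}$ codewords are drawn independently and uniformly from $\mathcal{G}_q(n,\capacity)$, conditioning on the transmitted codeword leaves the remaining $q^{n\rate}-1$ codewords i.i.d.\ uniform on $\mathcal{G}_q(n,\capacity)$. For any one such codeword $V(X')$, the probability of $Z$-compatibility equals the fraction of $\capacity$-dimensional subspaces of $\mathbb{F}_q^n$ containing $V_r$, which by the bijection with $(\capacity - z_r)$-dimensional subspaces of the quotient $\mathbb{F}_q^n / V_r \cong \mathbb{F}_q^{n - z_r}$ is
$$p \;=\; \frac{\binom{n-z_r}{\capacity - z_r}_q}{\binom{n}{\capacity}_q} \;\geq\; \tfrac{1}{4}\, q^{-z_r(n-\capacity)},$$
where the lower bound uses the estimates $q^{kn - k^2} \leq \binom{n}{k}_q \leq 4\, q^{kn - k^2}$ from \cite[Lemma~4]{KK08}.

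Next I would compute the expected count. Let $Y \sim \mathrm{Bin}(q^{n\rate}-1,\,p)$ count the number of $Z$-compatible codewords other than the transmitted one; then the total number of $Z$-compatible codewords is at least $Y$. Substituting $\rate = \capacity - z_w - \epsilon$ and using the identity $\capacity - z_w - z_r = z_{wo} + (\capacity - z_{ro} - 2 z_w)$, a short calculation gives
$$\mathbb{E}[Y] \;\geq\; \tfrac{1}{4}\, q^{z_r \capacity}\, q^{\,n(z_{wo} + \capacity - z_{ro} - 2 z_w - \epsilon)} \;\geq\; q^{\,n(z_{wo} + \delta_1)}$$
for all sufficiently large $n$, where the weak-adversary hypothesis $\capacity > z_{ro} + 2 z_w$ --- which by integrality of the parameters forces $\capacity - z_{ro} - 2 z_w \geq 1$ --- is exactly what keeps the exponent at least $z_{wo} + (1 - \epsilon)$ and thereby accommodates any $\delta_1$ with $\delta_1 \geq 1 - \epsilon$ (i.e.\ any $\delta_1$ up to the slack afforded by the weak-adversary gap).

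Finally I would invoke the standard multiplicative Chernoff bound $\Pr[Y \leq \mathbb{E}[Y]/2] \leq \exp(-\mathbb{E}[Y]/8)$ on the binomial $Y$; combined with the mean estimate above this yields exactly the claimed tail bound $\exp\!\bigl(-\tfrac{1}{8} q^{\,n(z_{wo} + \delta_1)}\bigr)$. There is no real conceptual obstacle; the only delicate step is bookkeeping --- keeping track of the factor $4$ from the Gaussian-coefficient estimates, absorbing the $-1$ in $q^{n\rate}-1$ and the $n$-independent prefactor $q^{z_r \capacity}$ into the asymptotics, and verifying that conditioning on James' observation preserves the i.i.d.\ distribution of the other codewords so that $Y$ really is binomial rather than merely binomially distributed in the marginal.
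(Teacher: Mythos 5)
Your proposal is correct and follows essentially the same route as the paper's proof: lower-bound the probability that a uniformly random codeword from $\mathcal{G}_q(n,\capacity)$ is compatible with the observed $z_r$-dimensional subspace via the Gaussian-coefficient estimates, deduce that the expected number of $Z$-compatible codewords is at least $q^{n(z_{wo}+\delta_1)}$ using $\rate=\capacity-z_w-\epsilon$ together with the weak-adversary condition and integrality, and finish with the lower-tail Chernoff bound $\Pr[Y\leq \mathbb{E}[Y]/2]\leq e^{-\mathbb{E}[Y]/8}$. If anything, your bookkeeping is slightly cleaner than the paper's: you use the exact count $\binom{n-z_r}{\capacity-z_r}_q$ of $\capacity$-dimensional subspaces containing $V_r$ (via the quotient space) and you make explicit that the non-transmitted codewords remain i.i.d.\ after conditioning, whereas the paper works with the ratio $\binom{n}{\capacity-z_r}_q/\binom{n}{\capacity}_q$ and is loose about the direction of the resulting inequality.
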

\begin{proof}
Since James observes a $z_r$-dimensional subspace, the remaining subspace is still uniformly distributed from his perspective. The cardinality of the remaining subspace is $\binom{n}{\capacity-z_r}_q$. 
Thus, the probability of a codeword being compatible with James' observation is $\frac{\binom{n}{\capacity-z_r}_q}{\binom{n}{\capacity}_q}$, which is no larger than $4q^{-nz_r +2\capacity z_r -z_r^2}$ (using the bounds on the Gaussian coefficient). We compute the expected number of $Z$-compatible codewords as%
\begin{align*}
 \mathbb{E}\left[\text{\# of }Z\text{-compatible codewords}\right]
    & = \frac{\binom{n}{\capacity-z_r}_q}{\binom{n}{\capacity}_q}q^{n\rate}\\
    & \geq 4q^{-nz_r +2\capacity z_r -z_r^2} q^{n\rate} \\
     &= 4q^{n(\capacity-z_{ro}-2z_{rw}-z_{wo}-\epsilon)+2\capacity z_r-z_r^2}\\
     & \geq q^{n(z_{wo}+\delta_1)}.
\end{align*}
The last inequality holds for $\delta_1 \geq 1-\epsilon$, since $\capacity>z_{ro}+2(z_{rw}+z_{wo})$ and $\capacity$, $z_{ro}$, $z_{rw}$ and $z_{wo} $ are integers. Then, by applying the lower tail of the Chernoff bound~\cite[Eqn.~(1.10.12)]{ben2018prob}, we can bound the number of $Z$-compatible codewords as given in the statement of the lemma.

\end{proof}

Lemma~\ref{James-uncertainty} quantifies the probability of obtaining more than $\frac{1}{2}q^{n(z_{wo}+\delta_1)}$ $Z$-compatible codewords given a fixed observation $V_r(X)$. To bound the probability of obtaining more than $\frac{1}{2}q^{n(z_{wo}+\delta_1)}$ $Z$-compatible codewords for all possible observations, we take the union bound over all $\binom{\capacity}{z_r}_q$ possible observations as follows
\begin{align*}
     \Pr\left(\text{\# of $Z$-compatible codewords with 
     \emph{any} } V_r(X)\leq \frac{1}{2}q^{n(z_{wo}+\delta_1)}\right) &\leq e^{-\frac{1}{8}q^{n(z_{wo}+\delta_1)}} \binom{\capacity}{z_r}_q\\
     & \leq 4q^{nz_r-z_r^2}e^{-\frac{1}{8}q^{n(z_{wo}+\delta_1)}}.
\end{align*}
\noindent We conclude that the probability of James having less than $\frac{1}{2}q^{n(z_{wo}+\delta_1)}$ $Z$-compatible codewords is exponentially small in $n$.


Afterwards, we can reveal the oracle-given set $\mathcal{M}_{og}$ to James and analyse the probability of error in the two following cases.

\subsubsection{Type-I Error}

In this case, we consider the confusing codeword $\hat{X}$, i.e. the codeword that may confuse Bob with the true codeword, is in $\mathcal{C} \setminus \mathcal{M}_{og}$. Recall that the set $\mathcal{C} \setminus \mathcal{M}_{og}$ is considered independent from the $\mathcal{M}_{og}$.

We use the following lemma to argue that with high probability there are at most $n^3$ confusing codewords in $\mathcal{C} \setminus \mathcal{M}_{og}$.

\begin{lemma}
 For any $V_r(X)$ observed by a weak adversary through the network, let $V(X_i)\in \cM_{og}$ be a $Z$-compatible codeword and let $V(S)$ be the adversary's jamming action. For $V(Y_i) = V_{ro}(X) \oplus V_{u}(X_i) \oplus V(S)$, define the decoding region of $V(Y_i)$ as $\cD(Y_i) \triangleq \{V(X)\;|\; \dim(V(X))= \cut,\ d(V(X),V(Y_i))\leq z_w\}$. Then, based on the random generation of the codebook $\mathcal{C}$, we can write
\begin{equation*}
    \Pr\left(\exists V_r(X), V(S),\text{ s.t. } \left| \bigcup_{V(X_i)\in \cM_{og}} \cD(Y_i) \cap (\cC \setminus \cM_{og})\right|\geq n^3\right) \leq e^{-\frac{\alpha}{3} n^3}.
\end{equation*} 
In other words, for any $V_r(X)$ observed by James and any $V(S)$ that James inject into the network, the probability that more than $n^3$ codewords in $\mathcal{C} \setminus \mathcal{M}_{og}$ fall into the union of the decoding regions $\cD(Y_i)$ of codewords $V(X_i) \in \mathcal{M}_{og}$ is bounded from above by $e^{-\frac{\alpha}{3} n^3}$.

\label{Type 1 list decoding}
\end{lemma}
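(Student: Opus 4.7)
The plan is to fix all adversarial choices, bound the expected number of codewords from $\cC \setminus \cM_{og}$ that land in the union of decoding balls, then apply a multiplicative Chernoff bound and finally a union bound over $(V_r(X), V(S))$. The randomness is the i.i.d.\ uniform generation of codewords in $\cC$; by the deferred-decisions argument sketched earlier, conditioning on $\cM_{og}$ leaves each codeword in $\cC \setminus \cM_{og}$ distributed as an independent uniform draw from $\mathcal{G}_q(n,\capacity)$, which is crucial for the concentration step.

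First I would bound the size of a single injection-distance ball $\cD(Y_i)$. Since $V(Y_i) = V_{ro}(X) \oplus V_u(X_i) \oplus V(S)$ is a direct sum of dimensions $z_{ro} + (\capacity - z_r - z_{wo}) + z_w = \capacity$, generically $\dim V(Y_i) = \capacity$, and the $\capacity$-dimensional subspaces $V(X')$ with $\dim(V(X') \cap V(Y_i)) \geq \capacity - z_w$ are counted by
\[
|\cD(Y_i)| \;\le\; \sum_{k=0}^{z_w} \binom{\capacity}{\capacity-k}_q \binom{n-(\capacity-k)}{k}_q \;\le\; \mathrm{poly}(n)\cdot q^{nz_w - z_w^2},
\]
using the bounds on Gaussian coefficients from \cite{KK08}. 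Taking the union over the $|\cM_{og}| = q^{n\epsilon_1}$ codewords gives $\bigl|\bigcup_i \cD(Y_i)\bigr| \le \mathrm{poly}(n) \cdot q^{n(z_w + \epsilon_1) - z_w^2}$.

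Next I would compute the expected number of $V(X') \in \cC \setminus \cM_{og}$ falling in this union. Each such $V(X')$ is uniform on $\mathcal{G}_q(n,\capacity)$, so it falls in the union with probability at most $\mathrm{poly}(n)\cdot q^{-n(\capacity - z_w - \epsilon_1) + O(1)}$. Multiplying by $|\cC \setminus \cM_{og}| \le q^{n\rate}$ and substituting $\rate = \capacity - z_w - \epsilon$ gives expectation $\mu \le \mathrm{poly}(n)\cdot q^{-n(\epsilon - \epsilon_1)}$, which is exponentially small once we take the free parameter $\epsilon_1 < \epsilon$. The multiplicative Chernoff bound then yields $\Pr(\text{count} \ge n^3) \le (e\mu/n^3)^{n^3}$, which is at most $e^{-\alpha n^3/3}$ for any prescribed constant $\alpha$ and all sufficiently large $n$.

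Finally, I would union-bound over all admissible $V_r(X)$ (at most $\binom{\capacity}{z_r}_q \le q^{O(n)}$ choices) and $V(S)$ (at most $\sum_{j\le z_w}\binom{n}{j}_q \le q^{O(n)}$ choices). Since the per-configuration failure probability is super-exponentially small in $n$ while the number of configurations is only singly exponential, the union bound absorbs easily and preserves a tail of the form $e^{-\alpha n^3/3}$ (after adjusting $\alpha$). The main obstacle I expect is twofold: carefully validating the deferred-decisions step, which requires that the oracle's partition of $Z$-compatible codewords be constructed \emph{after} codeword generation so that $\cC \setminus \cM_{og}$ remains genuinely independent of $\cM_{og}$; and the bookkeeping in the ball-size count, in particular ruling out degenerate cases where $V(Y_i)$ has dimension strictly less than $\capacity$ (which only makes $|\cD(Y_i)|$ smaller and hence is a benign case). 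Once these are handled, the Chernoff slack is so large that the stated bound follows without further effort.
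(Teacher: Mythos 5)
Your proposal follows essentially the same route as the paper's proof: bound the size of one injection-distance ball via Gaussian-coefficient estimates (order $q^{nz_w - z_w^2}$ up to constants), union over the $q^{n\epsilon_1}$ codewords of $\cM_{og}$, use the independence of $\cC\setminus\cM_{og}$ from $\cM_{og}$ (deferred decisions) to get an exponentially small expected number of confusing codewords once the rate slack satisfies $\epsilon > \epsilon_1$ plus lower-order terms, then apply the upper-tail Chernoff bound to get a super-exponential tail at threshold $n^3$ and absorb the singly-exponential union bound over $V_r(X)$ and $V(S)$. This matches the paper's argument in structure and in all key estimates, so no gaps to flag.
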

\begin{proof}
We start by bounding the cardinality of the decoding region $\cD(Y)$ for a given received codeword $V(Y)$. Since our decoding strategy is to decode to a codeword within distance $z_w$ from $V(Y)$, the cardinality $|\cD(Y)|$ is bounded as
\begin{align}
    |\cD(Y)|& = \sum_{i=1}^{z_w}\binom{\capacity}{i}_q \binom{n}{i}_q \nonumber\\
    &\leq z_w \binom{\capacity}{z_w}_q \binom{n}{z_w}_q \nonumber \\
     & \leq  16z_w q^{\capacity z_w - z_w^2} q^{nz_w - z_w^2}.\label{eq:cardinality_dec_reg}
\end{align}

Given an observation $V_r(X)$, its corresponding $\cM_{og}$ and a jamming action $V(S)$, let $\Lambda \triangleq \bigcup_{V(X_i)\in \cM_{og}}\cD(Y_i)$ be the union of the decoding regions corresponding to all codewords in $ \cM_{og} $. Then the cardinality of $\Lambda$ is at most $16q^{n\epsilon_1} z_w q^{\capacity z_w - z_w^2} q^{nz_w - z_w^2}$. The probability that a codeword in $\mathcal{C} \setminus \mathcal{M}_{og}$ falls in $\Lambda$, i.e., is confusing, is given by
\begin{align}
    \Pr\left( V(X) \in \Lambda \cap \mathcal{C} \setminus \mathcal{M}_{og}\right) & =\frac{16q^{n\epsilon_1} z_w q^{\capacity z_w - z_w^2} q^{nz_w - z_w^2}}{q^{n\capacity-\capacity^2}}\\
    & = 16z_w q^{-n(\capacity-\epsilon_1 -z_w -\frac{\capacity z_w - 2z_w^2}{n})}\label{eq:decoding_region}
\end{align}
Then the expected number of codewords from $\mathcal{C} \setminus \mathcal{M}_{og}$  that fall in $\Lambda$ is
\begin{align*}
    \mathbb{E}\left[|\Lambda\cap \cC\setminus\cM_{og}|\right] 
    & = 16z_w q^{-n(\capacity-\epsilon_1 -z_w -\frac{\capacity z_w - 2z_w^2}{n})} q^{n(\capacity-z_w-\epsilon-\epsilon_1)} \\
     &= 16 q^{-n(\epsilon - 2\epsilon_1 - \frac{\capacity z_w - 2z_w^2-\log_q(z_w)}{n})}\\
     & = q^{-n\epsilon_2}
\end{align*}
We can adjust $\epsilon$ and $\epsilon_1$ such that $\epsilon > 2\epsilon_1 + \frac{\capacity z_w - 2z_w^2-\log_q(z_w)}{n}$ to make sure that $\epsilon_2$ is a positive number. Therefore, the expected number of confusing codewords in $\mathcal{C} \setminus \mathcal{M}_{og}$ is exponentially small. Then, we can apply the upper tail of Chernoff bound~\cite[Eqn.~ (1.10.4)]{ben2018prob} to argue that the probability of having more than $n^3$ confusing codewords in $\mathcal{C} \setminus \mathcal{M}_{og}$ is bounded from above by $e^{-\frac{n^3}{3}}$. With this super-exponentially small probability, we can take the union bound over the number of $V_r(X)$ and $V(S)$, which are all of size exponential in $n$, to argue that: 
\begin{equation*}
    \Pr\left(\exists V_r(X), V(S) ,\text{ s.t. } |(\cC\setminus\cM_{og})\cap\Lambda| \geq n^3\right) \leq e^{-\frac{\alpha}{3} n^3}
\end{equation*} 
for some $\alpha>0$ to be obtained after taking the union bound.
\end{proof}

Next, we need to show that each confusing codeword in $\mathcal{C} \setminus \mathcal{M}_{og}$ can confuse at most $n^3$ codewords in $\mathcal{M}_{og}$ with high probability. This means that for a confusing codeword $V(\hat{X})$ in $\mathcal{C} \setminus \mathcal{M}_{og}$, the number of codewords $V(X_i)\in\cM_{og}$ such that $V(\hat{X})$ falls in the decoding region of $V(Y_i) = V_{ro}(X) \oplus V_{u}(X_i) \oplus V(S)$ is no more than $n^3$.

As an intermediate step, we show that James' best attack can reveal to him a $(z_{ro}+z_{rw}+z_{wo})$-dimensional subspace of the codeword $V(Y)$ received by Bob. 

\begin{lemma}
James can either observe or control at most a $(z_{ro}+z_{rw}+z_{wo})$-dimensional subspace of $V(Y)$ that Bob receives.
\label{James' effect}
\end{lemma}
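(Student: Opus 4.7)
The plan is to decompose $V(Y)$ according to the \emph{source} of each dimension at Bob and show that only three of the four summands in that decomposition are accessible to James; a single application of subadditivity of dimension then yields the stated bound.

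Concretely, I would use the direct-sum decomposition $V(Y) = V_{ro}(X) \oplus V_u(X) \oplus V_{wo}(S_{wo}) \oplus V_{rw}(S_{rw})$ that was set up just before the lemma statement. Here $V_{ro}(X)$ is the contribution travelling along the $z_{ro}$ read-only links, so $\dim V_{ro}(X) \le z_{ro}$; $V_{rw}(S_{rw})$ is James' injection on the $z_{rw}$ read-write links, so $\dim V_{rw}(S_{rw}) \le z_{rw}$; $V_{wo}(S_{wo})$ is his injection on the $z_{wo}$ write-only links, so $\dim V_{wo}(S_{wo}) \le z_{wo}$; and $V_u(X)$ is the residual part of Alice's codeword that never crosses a link under James' control. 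By construction James neither observes nor overwrites any vector of $V_u(X)$, whereas he respectively reads, writes, and writes the three other summands. Letting $W_{\text{adv}} := V_{ro}(X) + V_{rw}(S_{rw}) + V_{wo}(S_{wo}) \subseteq V(Y)$ collect every dimension of Bob's received subspace that James can access, subadditivity of dimension gives
\begin{equation*}
\dim(W_{\text{adv}}) \;\le\; \dim V_{ro}(X) + \dim V_{rw}(S_{rw}) + \dim V_{wo}(S_{wo}) \;\le\; z_{ro}+z_{rw}+z_{wo},
\end{equation*}
which is exactly the claimed bound.

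The only mild technical point is justifying the \emph{direct}-sum nature of the decomposition of $V(Y)$, i.e., that the four pieces intersect trivially so that the dimensions add rather than overlap. I would handle this by invoking the standard guarantee that for a sufficiently large field size $q$ random linear network coding delivers $\capacity$ linearly independent packets to Bob with high probability; contributions from disjoint groups of incoming links then intersect trivially and the dimension counts combine additively. Modulo this routine input, the lemma reduces to the one-line subadditivity bound above, and I do not anticipate any further obstacle.
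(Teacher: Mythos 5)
There is a genuine gap here: your argument is essentially circular. The decomposition $V(Y)=V_{ro}(X)\oplus V_{u}(X)\oplus V_{wo}(S_{wo})\oplus V_{rw}(S_{rw})$, with James confined to the first, third and fourth summands, is precisely what the lemma exists to justify, not a given you may start from. The concern the lemma must dispel is the naive count $z_{ro}+2z_{rw}+z_{wo}$: on each of the $z_{rw}$ read-write links James both reads a dimension of Alice's transmission and injects one of his own, so one must argue that the reading does not give him knowledge of any dimension of $V(Y)$ beyond the dimension he injects on that same link. Your proposal never engages with this; it simply declares that $V_u(X)$ is ``the residual part that never crosses a link under James' control'' and that James accesses only the other three summands, which presupposes exactly the point in question. (Your side remark about the sum being direct is not the real issue either --- subadditivity of dimension does not need directness; the issue is why James's knowledge of $V(Y)$ is confined to those three summands in the first place.)

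The paper's proof supplies the missing step. It shows that the $z_{rw}$ read-write links can contribute at most a $z_{rw}$-dimensional subspace to $V(Y)$: the data on those links is a $z_{rw}\times\capacity$ transfer matrix applied to the dimension-$\capacity$ code, and this matrix is a submatrix of a full-rank $\capacity\times\capacity$ transfer matrix, so those links inject at most $z_{rw}$ dimensions into what Bob receives. Consequently, whatever James writes on such a link replaces (wipes out of $V(Y)$) the content he read there, so per read-write link he observes or controls one dimension of $V(Y)$ --- either the subspace he read or the subspace he inserted, never both --- and the total is $z_{ro}+z_{rw}+z_{wo}$. Your write-up needs an argument of this kind (or some other explicit reason why the read content of the overwritten links cannot also surface in $V(Y)$ as a component known to James) before the dimension count you perform becomes a proof.
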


\begin{proof}
The key idea here is to argue that the $z_{rw}$ links can only reveal to James a subspace of dimension $z_{rw}$ of the received $V(Y)$. In other words, if James decides to eavesdrop and overwrite one of the $z_{rw}$ links, then the total dimension revealed to James about $V(Y)$ using this link is one (either the subspace he reads, or the subspace he inserts). To see this, recall that James eavesdrops $z_{ro}$ links and can blindly jam $z_{wo}$ links. This reveals to James a subspace of dimension $z_{ro}+z_{wo}$ about $V(Y)$. Assume that James observes and overwrites one link of the $z_{rw}$ links and only eavesdrops on the others. If our argument does not hold, then this action reveals to James a subspace of dimension $z_{rw}+1$ about $V(Y)$, i.e., those $z_{rw}$ links contribute to $V(Y)$ with a subspace of dimension $z_{rw}+1$. This is a contradiction of the network code for the following reason. For James to learn a $z_{rw}$-dimensional subspace about $V(Y)$ from those links, he will observe a $z_{rw} \times \capacity$ transfer matrix  multiplying a dimension $\capacity$ code. And this $z_{rw} \times \capacity$ transfer matrix is part of a full-rank $\capacity \times \capacity $ transfer matrix. Therefore, those links can contribute in a subspace of dimension at most $z_{rw}$ to $V(Y)$. Which means that every subspace added by James on those links wipes out the subspace he deleted from $V(Y)$ and his total knowledge from those links about $V(Y)$ is a $z_{rw}$-dimensional subspace. 
Thus, the dimension of subspace that James can make sure to appear at Bob's side is $z_{ro}+z_{rw}+z_{wo}$.
\end{proof}

According to Lemma~\ref{James' effect}, the received $V(Y)$ can be expressed as $V(Y) = V_{ro}(X) \oplus V_{u}(X) \oplus V_{rw}(S_{rw}) \oplus V_{wo}(S_{wo})$ where $V_{u}(X)$ is distributed uniformly at random from James' perspective. Since each codeword is chosen uniformly at random, it is as if each $V_u(X)$ is also chosen uniformly random. We will use the randomness in $V_{u}(X)$ to argue that a confusing codeword in $\mathcal{C} \setminus \mathcal{M}_{og}$ can confuse at most $n^3$ codewords in $\mathcal{M}_{og}$ with high probability.

\begin{lemma}
For any $V_r(X)$ observed by a weak adversary through the network and for any $V(S)$ that the adversary injects into the network, let $V(\hat{X})$ be a confusing codeword in $\mathcal{C} \setminus \mathcal{M}_{og}$. Define $\cD_1  \triangleq \{V(X_i)\in \mathcal{M}_{og} \;|\; d(V_{ro}(X) \oplus V_{u}(X_i) \oplus V(S), V(\hat{X}))\leq z_w\}$ as the set of codewords in $\cM_{og}$ confusable by $V(\hat{X})$. Based on the randomness in $V_u(X)$, we can write
\begin{equation*}
    \Pr\left(\exists V_r(X), V(S),\text{ s.t. } | \cD_1 |\geq n^3\right) \leq e^{-\frac{\alpha_1}{3} n^3}.
\end{equation*} 
In other words, the probability that a confusing codeword in $\mathcal{C} \setminus \mathcal{M}_{og}$ confuses more than $n^3$ codewords in $\mathcal{M}_{og}$ is bounded from above by $e^{-\frac{\alpha_1}{3} n^3}$.

\label{Type 1 reverse list decoding}
\end{lemma}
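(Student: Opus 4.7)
The approach parallels Lemma~\ref{Type 1 list decoding}, but now the concentration is driven by the randomness in the unseen pieces $V_u(X_i)$ of the codewords $V(X_i) \in \cM_{og}$, rather than by the randomness of codewords outside $\cM_{og}$. Fix the tuple $(V_r(X), V(S), V(\hat X))$ and set $W \triangleq V_{ro}(X) \oplus V(S)$, a fixed subspace of dimension $z_{ro} + z_w$. Then $V(Y_i) = W \oplus V_u(X_i)$ has dimension $\capacity$ (with $\dim V_u(X_i) = \capacity - z$), and the confusion event $d(V(Y_i), V(\hat X)) \leq z_w$ is equivalent to the intersection inequality $\dim(V(Y_i) \cap V(\hat X)) \geq \capacity - z_w$.

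The first step is to push this inequality onto $V_u(X_i)$ alone. Using the projection $\pi : V(Y_i) \to V_u(X_i)$ along $W$, rank-nullity gives $\dim(V(Y_i) \cap V(\hat X)) = \dim(W \cap V(\hat X)) + \dim \pi(V(Y_i) \cap V(\hat X))$, and $\pi(V(Y_i) \cap V(\hat X)) \subseteq V_u(X_i) \cap (V(\hat X) + W)$. Combining this with $\dim(W \cap V(\hat X)) \leq z_{ro} + z_w$, the confusion event forces
\[
\dim\!\bigl(V_u(X_i) \cap (V(\hat X) + W)\bigr) \;\geq\; \capacity - z_{ro} - 2 z_w,
\]
which is strictly positive by the weak-adversary hypothesis~\eqref{eq:weak-adversary}. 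The target $V(\hat X) + W$ here is a fixed subspace of dimension at most $\capacity + z_{ro} + z_w$.

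The second step is to compute the per-codeword confusion probability. Since $V(X_i)$ is uniform among $\capacity$-dimensional $Z$-compatible subspaces, the induced $V_u(X_i)$ is essentially uniform over $(\capacity - z)$-dimensional subspaces in general position with respect to $V_r(X)$ and the blind-write image. Counting $(\capacity - z)$-dim subspaces meeting a fixed $(\capacity + z_{ro} + z_w)$-dim subspace in at least $\capacity - z_{ro} - 2 z_w$ dimensions via the usual ratio of Gaussian coefficients, and using $q^{k(n-k)} \leq \binom{n}{k}_q \leq 4 q^{k(n-k)}$, the $-(\capacity - z_{ro} - 2 z_w) \, n$ contribution dominates the exponent, giving a per-codeword probability $p \leq q^{-n \delta'}$ for some $\delta' > 0$. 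Linearity of expectation over the $q^{n \epsilon_1}$ candidates in $\cM_{og}$ gives
\[
\mathbb{E}\!\left[|\cD_1|\right] \;\leq\; q^{n \epsilon_1} \cdot p \;\leq\; q^{-n(\delta' - \epsilon_1)},
\]
which is exponentially small after choosing $\epsilon_1 < \delta'$ (a tightening consistent with the choice already required in Lemma~\ref{Type 1 list decoding}).

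The remainder is routine. The upper-tail Chernoff bound~\cite[Eqn.~(1.10.4)]{ben2018prob} yields $\Pr[|\cD_1| \geq n^3] \leq e^{-n^3/3}$ for each fixed tuple, and a union bound over all $V_r(X)$, $V(S)$, and $V(\hat X)$ (at most $q^{O(n)}$ tuples in total, further restricted in the $V(\hat X)$ coordinate by the $n^3$ bound from Lemma~\ref{Type 1 list decoding}) is dwarfed by the super-exponential factor, delivering the stated $e^{-\alpha_1 n^3/3}$ for some $\alpha_1 \in (0,1)$. The main obstacle is the subspace-counting in the per-codeword probability: one must argue that the conditional distribution of $V_u(X_i)$ genuinely randomizes over $(\capacity - z)$-dim subspaces despite the structural constraints imposed by $Z$-compatibility and the network transform, and extract a margin $\delta' > \epsilon_1$. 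Once that counting lemma is in hand, the concentration and union-bound machinery mirrors Lemma~\ref{Type 1 list decoding} exactly.
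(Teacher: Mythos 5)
Your proposal is correct and its overall architecture matches the paper's: use the residual randomness of the unseen subspaces $V_u(X_i)$ of the codewords in $\cM_{og}$, show the per-codeword confusion probability is $q^{-\Omega(n)}$ via Gaussian-coefficient counting and the weak-adversary condition $\capacity > z_{ro}+2z_w$, conclude $\mathbb{E}[|\cD_1|]\leq q^{-n(\delta'-\epsilon_1)}$, apply the upper-tail Chernoff bound, and absorb a $q^{O(n)}$ union bound into $e^{-n^3/3}$. The one step you do genuinely differently is the reduction of the confusion event to a statement about $V_u(X_i)$ alone: the paper argues that James's \emph{best} confusing codeword must be aligned with his knowledge, i.e.\ $V(\hat{X}) = V_{ro}(X)\oplus V_u(\hat{X})\oplus V(S)$, and then counts pairs $V_u(\hat{X}), V_u(X_i)$ (each of dimension $\capacity-z_{ro}-z_w$) with intersection at least $\capacity-z_{ro}-2z_w$; you instead keep $V(\hat{X})$ arbitrary, set $W=V_{ro}(X)\oplus V(S)$, and use the projection/rank--nullity argument to show that $d(V(Y_i),V(\hat{X}))\leq z_w$ forces $\dim\bigl(V_u(X_i)\cap(V(\hat{X})+W)\bigr)\geq \capacity-z_{ro}-2z_w$, then count against the fixed target $V(\hat{X})+W$ and union bound over $V(\hat{X})$ as well. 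Your route buys rigor: it avoids having to justify the ``James would not use his power efficiently otherwise'' optimality heuristic, at the cost of a slightly larger (but still constant-dimensional) target subspace and an extra, harmless $q^{O(n)}$ factor in the union bound. Both arguments lean on the same unproved-but-asserted uniformity of the conditional distribution of $V_u(X_i)$ given James's view (the paper asserts this after Lemma~\ref{James' effect}); since you flag this explicitly rather than resolve it, your proof is at the same level of rigor as the paper on that point, and otherwise complete.
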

\begin{proof}
The more powerful omniscient adversary that knows the codebook and the transmitted message operates by carefully inserting errors to push the received codeword to the closest codeword (different than the transmitted one) in the codebook. To emulate such a powerful adversary, the best strategy for James (a weak adversary) is to fully leverage his knowledge and push $V(Y)=V_{ro}(X)\oplus V_u(X)\oplus V(S)$ towards $V(\hat{X})$ that satisfies $V(\hat{X}) = V_{ro}(X) \oplus V_{u}(\hat{X}) \oplus V(S)$. This means that James chooses a codeword compatible with his observation, deemed to be close to the transmitted codeword, and tries to push $V(Y)$ towards that codeword. 
Otherwise, James will not be using his power efficiently. 
We do the analysis for a given $V_r(X)$ and a given $V(S)$. We then take a union bound over all possible $V_r(X)$ and $V(S)$.

A codeword $V(\hat{X})$ confuses $V(X_i)\in \cM_{og}$ if the following holds $$d(V_{ro}(\hat{X})\oplus V_u(\hat{X})\oplus V(S), V_{ro}(X_i)\oplus V_u(X_i)\oplus V(S))\leq z_w.$$

Notice that James has full control over $V(S)$ and that all codewords $V(X_i)\in \cM_{og}$ satisfy $V_{ro}(X_i) = V_{ro}({X})$. To maximize the number of confusable codewords, James must chose $V_{ro}(\hat{X}) = V_{ro}(X)$. The only uncertainty that remains from James' perspective is in $V_u(X)$ which remains uniformly distributed. Therefore, all we have to count is the number of codewords in $\cM_{og}$ that satisfy $d(V_u(\hat{X}), V_u(X_i))\leq z_w$. By definition of the injection distance given in Eqn.~\eqref{eqn:distance-metric}, this implies that $\dim(V_u(\hat{X})\cap V_u(X_i))\geq \capacity-z_w$. 

Recall that $\dim(V_u(X_i)) = \dim(V_u(\hat{X})) = \capacity - z_{ro} - z_w$.
Therefore, for a fixed $V_u(\hat{X})$, the number of $V_u({X_i})$ that have an intersection of dimension at least $\capacity-2z_w-z_{ro}$ with $V_u(\hat{X})$ is bounded by
\begin{align*}
    \sum_{i=1}^{z_w} \binom{\capacity-z_{ro}-z_{w}}{\capacity-z_{ro}-z_{w}-i}_q \binom{n}{i}_q 
    &  \leq z_w \binom{\capacity-z_{ro}-z_{w}}{\frac{1}{2}(\capacity-z_{ro}-z_{w})}_q \binom{n}{z_w}_q\\
    &  \leq 4z_w q^{(\capacity-z_{ro}-z_w)^2\frac{1}{4}} q^{nz_w - z_w^2}
\end{align*}
Thus, the probability that $V(\hat{X})$ confuses $V_{ro}(X) \oplus V_{u}(X_i) \oplus V(S)$ is bounded as follows
\begin{align*}
    \Pr(V(\hat{X}) \text{ confuses }V(X_i)) & \leq \frac{4z_w q^{(\capacity-z_{ro}-z_w)^2\frac{1}{4}} q^{nz_w - z_w^2}}{\binom{n}{\capacity-z_{ro}-z{w}}_q}\\
    &\leq 4z_w q^{-n(\capacity-z_{ro}-2z_w-\frac{(\capacity-z_{ro}-z_w)^2\frac{1}{4}-z_w^2+(\capacity-z_{ro}-z_w)^2}{n})} \\ 
    & \leq 4z_w q^{-n}.
\end{align*}
The last inequality holds since $\capacity>z_{ro}+2z_w$ and all the numbers in the exponent of $q$ are integers. Thus, the expected number of codewords in $\mathcal{M}_{og}$ that a confusing codeword in $\cC \setminus \mathcal{M}_{og}$ can confuse is $4z_w q^{-n} q^{n\epsilon_1} = q^{-n\epsilon_3}$. Then, by the upper tail of the Chernoff bound~\cite[Eqn.~(1.10.4)]{ben2018prob} we can argue that the probability that $V(\hat{X})$ can confuse more than $n^3$ codewords in $\mathcal{M}_{og}$ is no more than $e^{-\frac{n^3}{3}}$.
Then we take union bound over $V_r(X)$ and $V(S)$, which are both exponentially in $n$, to show that the lemma holds with some $\alpha_1 > 0$.
\end{proof}

Based on Lemma~\ref{Type 1 list decoding} and Lemma~\ref{Type 1 reverse list decoding}, we can argue that with high probability at most $n^6$ codewords in $\mathcal{M}_{og}$ can be confused by a confusing codeword in $\mathcal{C} \setminus \mathcal{M}_{og}$. 

\subsubsection{Type-II Error}
In this case, we consider the confusing codeword $V(\hat{X})$ to be in $\mathcal{M}_{og}$. Recall that the size of $\mathcal{M}_{og}$ is $q^{n\epsilon_1}$. We pick each element of $\mathcal{M}_{og}$ uniformly at random from the set of all compatible codewords. Recall that the number of possible compatible codewords is $\binom{n}{\capacity-z_r}_q$.

Then we arrange the codewords in $\mathcal{M}_{og}$ in the following way:
\begin{enumerate*}[label={\textit{\roman*)}}] \item Initialize a $q^{\frac{n\epsilon_1}{2}} \times q^{\frac{n\epsilon_1}{2}}$ grid; \item Arrange each codeword of $\mathcal{M}_{og}$ randomly into the grid.\end{enumerate*} We pick any row or column from the grid and refer to it as the mini-oracle-given set $\mathcal{M}_{mi}$. In this way, we divide the codewords in $\mathcal{M}_{og}$ into two parts: $\mathcal{M}_{og} \setminus \mathcal{M}_{mi}$ and $\mathcal{M}_{mi}$. Here the set $\mathcal{M}_{og} \setminus \mathcal{M}_{mi}$ is considered independent from $\mathcal{M}_{mi}$ in the unseen dimension $\capacity-z_r$ subspace.
Now we consider a codeword $V(\hat{X})$ in $\mathcal{M}_{og} \setminus \mathcal{M}_{mi}$ that may confuse a codeword in $\mathcal{M}_{mi}$.

\begin{lemma}
For any $V_r(X)$ observed by James, let $S_{rw}$ and $S_{wo}$ be his jamming action and recall the decoding region defined as $\cD(Y_i) = \{V(X)\;|\; \dim(V(X))= \cut,\ d(V(X),V(Y_i))\leq z_w\}$ where $V(Y_i) = V_{ro}(X) \oplus V_{u}(X_i) \oplus V_{rw}(S_{rw}) \oplus V_{wo}(S_{wo})$. Then, based on the randomness from the subspace $V_u(X)$ of $V(X)$ that is not observed by James, we can write
\begin{equation*}
    \Pr\left(\exists V_r(X), V(S),\text{ s.t. } \left| \bigcup_{V(X_i)\in \cM_{mi}} \cD(Y_i) \cap (\cM_{og} \setminus \cM_{mi})\right|\geq n^3\right) \leq e^{-\frac{\alpha_2}{3} n^3}.
\end{equation*} 
In other words, for any $V_r(X)$ observed by James and any $V(S)$ that James inject into the network, the probability that more than $n^3$ codewords in $\mathcal{M}_{og} \setminus \mathcal{M}_{mi}$ fall into the union of the decoding regions $\cD(Y_i)$ of codewords $X_i\in \mathcal{M}_{mi}$ is bounded from above by $e^{-\frac{\alpha_2}{3} n^3}$. 

\label{Type 2 list decoding}
\end{lemma}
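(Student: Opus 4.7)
The plan is to closely mirror the proof of Lemma~\ref{Type 1 list decoding}, with two structural modifications that reflect the different status of the confusing codewords: here they lie in $\cM_{og}\setminus\cM_{mi}$ and are therefore $Z$-compatible (i.e., their row-spaces contain $V_r(X)$), rather than being uniform over the full Grassmannian $\cG_q(n,\capacity)$. This affects only the per-codeword probability bound; the list-decoding skeleton is otherwise unchanged.

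Concretely, I would fix $V_r(X)$ and $V(S)$ and reuse the decoding-region cardinality bound $|\cD(Y_i)| \leq 16 z_w q^{\capacity z_w + nz_w - 2z_w^2}$ from Eqn.~\eqref{eq:cardinality_dec_reg}. Since $|\cM_{mi}| = q^{n\epsilon_1/2}$ (one row or column of the grid), this yields $|\Lambda| \leq 16 z_w q^{n\epsilon_1/2 + \capacity z_w + nz_w - 2z_w^2}$ for $\Lambda = \bigcup_{V(X_i)\in \cM_{mi}} \cD(Y_i)$. A codeword in $\cM_{og}\setminus\cM_{mi}$ is uniformly distributed over the $\capacity$-dimensional subspaces containing $V_r(X)$, a population of size $\binom{n-z_r}{\capacity-z_r}_q \geq q^{(\capacity-z_r)(n-\capacity)}$, so the probability that a single such codeword falls into $\Lambda$ has exponent at most $\epsilon_1/2 + z_w - (\capacity-z_r) + o(1)$ in $n$. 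Multiplying by $|\cM_{og}\setminus\cM_{mi}| \leq q^{n\epsilon_1}$, the expected count is bounded by $q^{n(3\epsilon_1/2 - (\capacity-z_r-z_w)) + O(1)}$.

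The weak-adversary inequality~\eqref{eq:weak-adversary} guarantees $\capacity - z_r - z_w = \capacity - z_{ro} - 2z_{rw} - z_{wo} \geq 1 + z_{wo} \geq 1$ by integrality, so choosing $\epsilon_1$ small enough (e.g., $\epsilon_1 < 2/3$) makes this expected count decay as $q^{-n\epsilon_4}$ for some $\epsilon_4 > 0$. An upper-tail Chernoff bound~\cite[Eqn.~(1.10.4)]{ben2018prob} then gives $\Pr[\,|(\cM_{og}\setminus\cM_{mi})\cap \Lambda| \geq n^3\,] \leq e^{-n^3/3}$ for the fixed pair, and a union bound over the at most $q^{O(n)}$ choices of $V_r(X)$ and $V(S)$ preserves super-exponential smallness, producing the claimed bound $e^{-\alpha_2 n^3/3}$.

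The main obstacle, and precisely the reason the grid construction was introduced in the first place, is legitimizing the Chernoff step: both $\cM_{mi}$ and $\cM_{og}\setminus\cM_{mi}$ are sampled from the same $Z$-compatible population, so a naive list-decoding argument would be circular. I would resolve this via a Principle of Deferred Decisions argument~\cite{noga2004probabilistic}: because the random placement into the grid is independent of the unseen $(\capacity-z_r)$-dimensional components of the codewords, conditioning on which codewords land in the chosen row/column fixes $\cM_{mi}$ and $\Lambda$ while leaving the remaining codewords i.i.d.\ uniform over $Z$-compatible subspaces. A secondary tightness point is that the $3\epsilon_1/2$ slack in the exponent must be absorbed into the weak-adversary gap of at least $1$; this is exactly why the grid side-length is chosen to be $q^{n\epsilon_1/2}$ rather than something larger.
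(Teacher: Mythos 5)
Your proposal is correct and follows essentially the same route as the paper's proof: the same decoding-region cardinality bound union-bounded over the $q^{n\epsilon_1/2}$ codewords of $\cM_{mi}$, the same per-codeword probability computed against the (exponentially large, roughly $q^{n(\capacity-z_r)}$) population of unseen components, the same expectation-plus-Chernoff step, and the same union bound over $V_r(X)$ and $V(S)$, with the grid/deferred-decisions independence handled exactly as the paper does in its Type-II setup. The only cosmetic difference is your use of $\binom{n-z_r}{\capacity-z_r}_q$ in place of the paper's $\binom{n}{\capacity-z_r}_q$, which does not change the leading exponent.
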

\begin{proof}
Recall from Eqn.~\eqref{eq:cardinality_dec_reg} that the cardinality of the decoding region of $V(Y)$ is no more than $ 16z_w q^{\capacity z_w - z_w^2} q^{nz_w - z_w^2}$. Given an observation $V_r(X)$, the set $\cM_{mi}$ and a jamming action $V(S)$, let $\Lambda_{mi} \triangleq \bigcup_{V(X_i)\in \cM_{mi}}\cD(Y_i)$ be the union of the decoding regions corresponding to all codewords in $ \cM_{mi} $. Then, the cardinality of $\Lambda_{mi}$ is at most $16z_w q^{\capacity z_w-z_w^2}q^{nz_w-z_w^2}q^{n\frac{\epsilon_1}{2}}$. The probability that a codeword in $\mathcal{M}_{og} \setminus \mathcal{M}_{mi}$ falls in $\Lambda_{mi}$, i.e., is confusing, is given by 
\begin{align*}
    \frac{16z_w q^{\capacity z_w-z_w^2}q^{nz_w-z_w^2}q^{n\frac{\epsilon_1}{2}}}{q^{n(\capacity-z_r)-(\capacity-z_r)^2}} 
    &= 16z_w q^{-n(\capacity-z_r-z_w-\frac{\epsilon_1}{2}+\frac{2z_w^2-\capacity z_w}{n})}\\
    &= q^{-n\epsilon_4}
\end{align*}
where the denominator is the size of the subspace that is not observed by James and $\epsilon_4$ is some positive number. Hence, the expected number of confusing codewords is $q^{-n\epsilon_4} q^{n\epsilon_1} = q^{-n\epsilon_5}$ for some positive $\epsilon_5 = \epsilon_4 - \epsilon_1$. By applying the upper tail of the Chernoff bound~\cite[Eqn.~(1.10.4)]{ben2018prob}, we conclude that the probability of having more than $n^3$ confusing codewords in $\mathcal{M}_{og} \setminus \mathcal{M}_{mi}$ is bounded from above by $e^{-\frac{n^3}{3}}$. 

Taking the union bound over the size of $V_r(X)$ and $V(S)$, we can argue that with probability at most $e^{-\frac{\alpha_2}{3}n^3}$ for some coefficient $\alpha_2 > 0$, there are more than $n^3$ confusing codewords in $\mathcal{M}_{og} \setminus \mathcal{M}_{mi}$.
\end{proof}

Next we need to show that each confusing codeword in $\mathcal{M}_{og} \setminus \mathcal{M}_{mi}$ can confuse at most $n^3$ codewords in $\mathcal{M}_{mi}$. The argument is similar to the one made in Lemma~\ref{Type 1 reverse list decoding}.

\begin{lemma}
For any $V_r(X)$ observed by a weak adversary through the network and for any $V(S)$ that the adversary injects into the network, let $V(\hat{X})$ be a confusing codeword in $\mathcal{M}_{og} \setminus \mathcal{M}_{mi}$. Define $\cD_2  \triangleq \{V(X_i)\in \mathcal{M}_{mi} \;|\; d(V_{ro}(X) \oplus V_{u}(X_i) \oplus V(S), V(\hat{X}))\leq z_w\}$ as the set of codewords in $\cM_{mi}$ confusable by $V(\hat{X})$.
Based on the randomness in $V_u(X)$
we can write 
\begin{equation*}
    \Pr\left(\exists V_r(X), V(S),\text{ s.t. } | \cD_2 |\geq n^3\right) \leq e^{-\frac{\alpha_3}{3} n^3}.
\end{equation*} 
In other words, the probability that a confusing codeword in $\mathcal{M}_{og} \setminus \mathcal{M}_{mi}$ confuses more than $n^3$ codewords in $\mathcal{M}_{mi}$ is bounded from above by $e^{-\frac{\alpha_3}{3} n^3}$.
\label{Type 2 reverse list decoding}
\end{lemma}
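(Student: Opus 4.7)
The plan is to prove Lemma~\ref{Type 2 reverse list decoding} by essentially replaying the argument of Lemma~\ref{Type 1 reverse list decoding}, with the row/column $\cM_{mi}$ of the grid now playing the role previously played by $\cM_{og}$, and $V(\hat{X})\in\cM_{og}\setminus\cM_{mi}$ playing the role of the external confusing codeword. The key structural point is that the random grid assignment isolates $\cM_{mi}$ from the rest of $\cM_{og}$: conditioned on James' observation $V_r(X)$ and on the event that $V(\hat{X})\in\cM_{og}\setminus\cM_{mi}$, the unseen parts $V_u(X_i)$ of the codewords in $\cM_{mi}$ remain uniformly and independently distributed over the $(\capacity-z_r)$-dimensional directions transversal to $V_r(X)$.

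With that in hand, I would first argue (exactly as in Lemma~\ref{Type 1 reverse list decoding}) that James' optimal strategy against a fixed confusing $V(\hat{X})$ is to choose $V(S)$ such that $V(\hat{X}) = V_{ro}(X)\oplus V_u(\hat{X})\oplus V(S)$. Under this alignment, the confusion condition $d(V_{ro}(X)\oplus V_u(X_i)\oplus V(S),V(\hat{X}))\leq z_w$ reduces to the subspace condition $\dim(V_u(\hat{X})\cap V_u(X_i))\geq \capacity-z_{ro}-2z_w$. Counting $(\capacity-z_{ro}-z_w)$-dimensional subspaces that meet a fixed $V_u(\hat{X})$ in codimension at most $z_w$ via the same Gaussian-coefficient bound used in Lemma~\ref{Type 1 reverse list decoding}, and dividing by $\binom{n}{\capacity-z_r}_q$, yields the identical per-codeword estimate
\begin{equation*}
\Pr\bigl(V(\hat{X})\text{ confuses }V(X_i)\bigr)\ \leq\ 4z_w\, q^{-n},
\end{equation*}
where the exponent-$1$ slack is produced by the weak-adversary condition $\capacity>z_{ro}+2z_w$ together with integrality of the parameters.

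Next I would multiply by $|\cM_{mi}|=q^{n\epsilon_1/2}$ to obtain an expected number of confusable codewords in $\cM_{mi}$ of order $4z_w\, q^{-n(1-\epsilon_1/2)}$, which is exponentially small as long as $\epsilon_1<2$ (a slack we can reserve when picking the parameters). The upper-tail Chernoff bound (\cite[Eqn.~(1.10.4)]{ben2018prob}) then gives $e^{-n^3/3}$ for the probability that $|\cD_2|\geq n^3$ at a fixed pair $(V_r(X),V(S))$. Since both $V_r(X)$ and $V(S)$ range over sets of size $q^{O(n)}$, a union bound only costs an exponential factor, which is dominated by the super-exponential term, yielding the claimed $e^{-\alpha_3 n^3/3}$ for some $\alpha_3>0$.

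The main obstacle, and the only place where the argument genuinely diverges from Lemma~\ref{Type 1 reverse list decoding}, is the justification of the independence used above: the separation of $\cM_{mi}$ from $\cM_{og}\setminus\cM_{mi}$ is not inherited from the oracle's partition but is produced by the random grid assignment. I plan to handle this by invoking the Principle of Deferred Decisions~\cite{noga2004probabilistic} as follows: first sample the $Z$-compatible codewords that populate $\cM_{og}$, then reveal only the grid coordinate of each codeword, leaving the actual unseen component $V_u$ uniformly distributed on the $(\capacity-z_r)$-dimensional sky above $V_r(X)$ and independent across codewords. This makes the $V_u(X_i)$ for $V(X_i)\in\cM_{mi}$ genuinely independent of the identity of $V(\hat{X})\in\cM_{og}\setminus\cM_{mi}$, which is exactly what Chernoff requires. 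Once the independence is formalized, the rest of the proof is a direct transcription of the Type-I computation.
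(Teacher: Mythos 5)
Your proposal follows essentially the same route as the paper's proof: replay the alignment-and-counting argument of Lemma~\ref{Type 1 reverse list decoding} (via Lemma~\ref{James' effect}) to obtain the per-codeword confusion probability $4z_w q^{-n}$, multiply by $|\cM_{mi}| = q^{n\epsilon_1/2}$ to get an exponentially small expectation, apply the upper-tail Chernoff bound, and union-bound over the exponentially many $(V_r(X),V(S))$ pairs. Your explicit deferred-decisions justification of the independence between $\cM_{mi}$ and $\cM_{og}\setminus\cM_{mi}$ merely makes precise what the paper asserts informally through the random grid construction, so the two arguments coincide.
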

\begin{proof}
The proof is similar to the proof of Lemma~\ref{Type 1 reverse list decoding}. First by Lemma~\ref{James' effect}, despite that James manages to eavesdrop on $z_r$ links and overwrite $z_w$ links, he can control a subspace of dimension at most $z_{ro} + z_{rw} + z_{wo}$ in $V(Y)$. Thus, the same analysis made in
Lemma~\ref{Type 1 reverse list decoding} holds and $V(\hat{X})$ can confuse $V_{ro}(X) \oplus V_{u}(X_i) \oplus V(S)$ with probability at most $4z_w q^{-n}$. Then, the expected number of confusable codewords in $\mathcal{M}_{mi}$ is $4z_w q^{-n} q^{\frac{n\epsilon_1}{2}} = q^{-n\epsilon_6}$. By the Chernoff bound~\cite[Eqn.~(1.10.4)]{ben2018prob} we can argue that with probability at most $e^{-\frac{n^3}{3}}$, there are more than $n^3$ codewords in $\mathcal{M}_{mi}$ that are confusable with this $V(\hat{X})$.

Taking the union bound over the size of $V_r(X)$ and $V(S)$, which are both exponentially in $n$, we argue that the lemma holds for some $\alpha_3 > 0$.
\end{proof}

Based on the Lemma \ref{Type 2 list decoding} and Lemma~\ref{Type 2 reverse list decoding}, we can argue that for one row or one column to be the $\mathcal{M}_{mi}$, and for any $V_r(X)$ and $V(S)$, the probability having more than $n^6$ confusable codewords in the $\mathcal{M}_{mi}$ is bounded from above by $e^{-\frac{\alpha_2 + \alpha_3}{3}n^3}$. Also, note that the confusable codewords will be either in a row or in a column of $\cM_{og}$. Henceforth, the total number of confusable codewords that can be confused by a confusing codeword from $\mathcal{M}_{og}$ in the whole $\mathcal{M}_{og}$ is no more than $2n^6 q^{\frac{n\epsilon_1}{2}}$ with high probability.

Overall, based on the analysis for type-I error and type-II error, we proved that with high probability, there are at most $n^6 + 2n^6 q^{\frac{n\epsilon_1}{2}}$ confusable codewords in $\mathcal{M}_{og}$. We can conclude that the probability of James' attack succeeding in confusing the actually transmitted codeword is at most $\frac{n^6 + 2n^6 q^{\frac{n\epsilon_1}{2}}}{q^{n\epsilon_1}} \approx \frac{2n^6}{q^{n\epsilon_1/2}}$, which is still exponentially small in $n$.

\subsection{Converse}
\label{sec:converse}
We now argue that the rate achieved by our scheme is optimal by providing jamming strategies for James that ensure that any rate higher than those attained by our schemes must result in a non-vanishing probability of error.

The rate converse of $\cut-z_w$ follows by standard cutset arguments; since there are at most $\cut n$ symbols on the min-cut, if James corrupts $z_w n$ of these symbols with random noise, the residual throughput is at most $(\cut-z_w)n$. 
The tighter rate converse of $(\cut-2z_w)^{+}$ corresponding to the strong adversary regime of $ z_{ro}+2(z_{rw}+z_{wo}) \geq \cut $ follows from the techniques in \cite{converse}. In~\cite{converse}, even for the special case of a ``parallel-edge network" comprising simply of $\cut$ edges linking the source to the sink, the jammer proceeds as follows. Roughly speaking, James first observes the transmissions on the $z_{ro}$ read-only links, picks a codeword $X'$  uniformly at random from all possible codewords compatible with these observations, and then on the $z_{wo}+ z_{rw}$ links he can write on, he replaces the transmissions with the transmissions corresponding to $X'$. It can then be seen that the decoder Bob is unable to determine whether Alice's actual transmission is $X$ or $X'$. In~\cite{converse} this proof is formalized by combining with a Fano's inequality-based argument to show that the probability of error is bounded away from zero for any code (including scenarios wherein stochastic encoding is employed). 
When specialized to the specific subspace codes used in this work, James' attack proceeds as follows.  If the rate exceeds $\cut - 2z_w$, then by the strong-adversary condition he can pick a codeword $V({X}')$ compatible with his observations on his read-only links, and then replace the transmissions on the $z_w$ he can corrupt to be compatible with this $V({X}')$.

\subsection{Secrecy capacity}\label{sec:coset_code}
We explain how to use the coset code of~\cite{el2012secure} to securely achieve the rate $\rate_{sec}^\ast = \cut - z_r-z_w - \epsilon$ for the weak adversary regime, i.e., $\cut>z_{ro}+2z_w$. Recall that the capacity for the strong adversary regime is equal to zero.

The coding strategy consists of using an MDS code $\cC_{\text{MDS}} \in \mathbb{F}_{q^n}^{\rate}$ with length equal to $\rate$ and dimension equal to $z_r$. Alice partitions the space $\mathbb{F}_{q^n}^{\rate}$ into $q^{n(\rate - z_r)}$ cosets of $\cC_{\text{MDS}}$, each of size $q^{nz_r}$. Let $H$ be the $(\rate -z_r)\times \rate$ parity check of $\cC_{\text{MDS}}$. To send a message $m$ that consists of $n(\rate-z_r)$ symbols, Alice chooses a vector $s$ uniformly at random from the coset $\cC_1\subset\mathbb{F}_{q^n}^\rate$ of $\cC_{\text{MDS}}$ that satisfies $m = Hs$ for all $s\in \cC_1$. Due to the properties of MDS codes, observing any $nz_r$ or less symbols of $s$ does not reveal any information about $m=Hs$. To send $s$ over the network, Alice represents $s$ as a vector in $\mathbb{F}_q^{n\rate}$ and encodes it using the random subspace code introduced in this paper. Perfect secrecy is maintained because James observes at most $z_r$ links which reveal at most $nz_r$ symbols of $s$. Reliability against James' jamming attack is ensured with high probability by the subspace codes. After receiving and decoding $s$ with high probability, Bob simply computes $m = Hs$ to recover the message.

Note that the network performs linear operations on the transferred packets which may give more information than intended to the adversary and break the perfect secrecy. The authors of~\cite{el2012secure} show that perfect secrecy is guaranteed under any random linear network code as long as all transfer matrices $T_{z_r}$ of any collection of $z_r$ links do not belong to the space spanned by the rows of $H$. This can be maintained by choosing a large enough field size $q$.

Reliable communication is guaranteed as long as $\rate = \cut - z_w - \epsilon$ for sufficiently small values of $\epsilon$. This ensures a secure and reliable transmission of a message $m$ of length $n(\rate - z_r)$ and therefore achieves $\rate_{sec} = \cut - z_r - z_w - \epsilon$.

The fact that, for all parameter regimes in $\cut, z_r, z_w$, communication is simultaneously reliable and secure (even if one only requires weak secrecy) follows from the corresponding converse argument for parallel-edge networks with overwrite adversaries in~\cite{converse}.




\section{Acknowledgement}
\label{sec:ack}

RB's work was supported by the European Research Council (ERC) under the European Union’s Horizon 2020 research and innovation programme (grant agreement No. 801434) and from the Technical University of Munich - Institute for Advanced Studies, funded by the German Excellence Initiative and European Union Seventh Framework Programme under Grant Agreement No. 291763.
SJ's work was supported by funding from the Hong Kong UGC GRF grants 14304418, 14300617 and 14313116.
YZ has received funding from the European Union’s Horizon 2020 research and innovation programme under grant agreement No 682203-ERC-[Inf-Speed-Tradeoff].

\bibliographystyle{ieeetr}
\bibliography{IEEEabrv,Refs.bib}




\end{document}